\documentclass[lettersize,journal]{IEEEtran}
\usepackage{amsmath,amsfonts,amssymb,mathrsfs}
\usepackage{algorithmic}
\usepackage{algorithm}
\usepackage{array}
\usepackage[caption=false,font=normalsize,labelfont=sf,textfont=sf]{subfig}
\usepackage{textcomp}
\usepackage{stfloats}
\usepackage{url}
\usepackage{verbatim}
\usepackage{graphicx}
\usepackage{cite}
\usepackage{epstopdf}
\usepackage{amsthm}
\usepackage{color}
\theoremstyle{definition}

\newtheorem{theorem}{Theorem}[section]
\newtheorem{lemma}{Lemma}[section]
\newtheorem{remark}{Remark}[section]

\newtheorem{assumption}{Assumption}[section]
\newtheorem{definition}{Definition}[section]

\begin{document}
	
	\title{Fully distributed consensus control for stochastic multi-agent systems under undirected and directed topologies}
	
	\author{Xuping Hou, Xiaofeng Zong, \IEEEmembership{Member, IEEE}, Yong He, \IEEEmembership{Senior Member, IEEE}
			\thanks{The research was supported by the National Natural Science Foundation of China under Grants 62473347 and 62522319, and by the Hubei Provincial Outstanding Young and Middle-Aged Science and Technology Innovation Team, T2024032.}
			\thanks{X. Hou, X. Zong, and Y. He are with the School of Artificial Intelligence and Automation, China University of Geosciences, Wuhan 430074, China, and Hubei Key Laboratory of Advanced Control and Intelligent Automation for Complex Systems, Wuhan 430074, China.(e-mail:  houxuping@cug.edu.cn, zongxf@cug.edu.cn, heyong08@cug.edu.cn)}
		}
		
		
		
		\maketitle
		
		\begin{abstract}
			This work aims to address the design of fully distributed control protocols for stochastic consensus, and, for the first time, establishes the existence and uniqueness of solutions for the path-dependent and highly nonlinear closed-loop systems under both undirected and directed topologies, bridging a critical gap in the literature.
			For the case of directed graphs, a unified fully distributed control protocol is designed for the first time to guarantee mean square and almost sure consensus of stochastic multi-agent systems under directed graphs.
			Moreover, an enhanced fully distributed protocol with additional tunable parameters designed for undirected graphs is proposed, which guarantees stochastic consensus while achieving superior convergence speed. Additionally, our work provides explicit exponential estimates for the corresponding convergence rates of stochastic consensus, elucidating the relationship between the exponential convergence rate and the system parameters. Simulations validate the theoretical results.
			
		\end{abstract}
		
		\begin{IEEEkeywords}
			stochastic system, multi-agent system, fully distributed consensus, undirected and directed topologies.
		\end{IEEEkeywords}

		\section{Introduction}
		In recent years, researchers have discovered that individual agents with limited communication and perception capabilities can effectively collaborate with each other in teams to accomplish more complex and diverse tasks that exceed their individual capabilities, and to some extent, save resources and costs \cite{zhu2013flocking, ge2025distributed}. Consequently, the distributed collaborative control of multi-agent systems (MASs) has found widespread applications in various fields, including geological and marine exploration, agricultural automation, as well as military and aerospace \cite{fax2004information, olfati2006flocking, ren2008distributed, zhang2015satellite, dong2015time}. Among these applications, the consensus problem \cite{olfati2004consensus} stands out as a typical and fundamental issue in distributed collaborative control.
		
		Nevertheless, the design of control protocols discussed in the above works relies on Laplacian matrix information, which is a global information about the structure of communication topologies. This reliance imposes a significant limitation by hindering fully distributed control protocols and consequently undermining some advantages inherent to distributed control approaches. To overcome this constraint, Li et al. proposed for the first time a fully distributed protocol for consensus and tracking control of MASs, marking a significant shift toward local-information-based strategies \cite{li2013distributed_tac}. Since then, researchers have made substantial progress in more complex situations through numerous exemplary contributions in the area of fully distributed control of deterministic MASs \cite{shafiee2018multi, sahafi2025fully}.

	A key development in this field is the widespread adoption of adaptive gain design, which enables dynamic control adjustment using only local information. Recent studies demonstrate this approach across diverse network topologies and system models. For instance, in \cite{mei2016distributed}, the consensus problem of second-order MASs with heterogeneous unknown time-varying inertias and control gains was studied under directed graphs through adaptive $\sigma$-modification schemes.
			In \cite{xu2022fully}, the problem of fully distributed consensus for MASs with general linear dynamics under undirected graphs was solved via adaptive dynamic event-triggered communication strategies.
			Further extending to signed networks, a time-varying parametric Lyapunov-based adaptive protocol was proposed in \cite{zhou2024fully} for prescribed-time bipartite synchronization in cooperative-antagonistic networks.
			Meanwhile, distributed adaptive mechanisms were employed in \cite{zuo2023fully} and \cite{wang2021fully} to achieve practical fixed-time consensus for single-integrator systems under undirected graphs and leaderless/leader-following consensus for second-order systems under directed graphs, respectively. 
			Moreover, a topology-agnostic, data-driven event-triggered adaptive learning algorithm was introduced in \cite{ma2025event} for cooperative control under model uncertainty and communication constraints.
		Collectively, these works demonstrate that the adaptive mechanisms can effectively operate across diverse network structures, ranging from undirected and directed graphs to signed digraphs and topology-agnostic scenarios.
			However, it should be noted that the aforementioned protocols primarily focus on deterministic MASs, leaving the challenge of stochastic disturbances largely unaddressed in current fully distributed control frameworks.
		
		In modern engineering systems, interference is common and inevitable, and is caused by many factors such as complex communication environments, abrupt changes in working conditions, aging or damage of equipment, friction, and so on. Ignoring interference often results in an inability to meet high-precision control requirements in most practical applications. Therefore, considering the stochastic noise interference in the dynamics or communication topology of MASs in complex environments, researchers typically employ $\mathrm{It}\hat{\mathrm{o}} $ stochastic differential equations to model the dynamic behavior of each agent\cite{mao2007stochastic,nourian2012mean,shaikhet2013lyapunov}. In recent years, there has been a wealth of outstanding research addressing stochastic MASs, and many studies in this area can be found in \cite{ma2017consensus,kaviarasan2020non,jia2024time}.
		
		However, as far as we know, research on stochastic fully distributed consensus appears to be quite limited in the existing literature.
		In \cite{gu2020adaptive}, the synchronization problem of a stochastic coupled nonlinear dynamic complex network under adaptive proportional-integral control was investigated.
		In \cite{li2022fully}, a fully distributed tracking control protocol was proposed for the stochastic nonlinear stochastic MASs with Markov switching topology by designing intermittent adaptive gains.
		Additionally, in \cite{wen2024fully}, the fully distributed bipartite time-varying formation tracking control problem for heterogeneous linear MASs with stochastic disturbances was investigated under a signed Markovian switching topology.
		However, the aforementioned literature primarily focuses on synchronization or tracking control problems in the mean square(m.s.) sense and the interaction topologies among the nodes or followers are assumed to be undirected.
		The conventional analytical techniques for undirected graphs are proven inadequate for the case of directed graphs due to their reliance on symmetric properties--a fundamental feature absent in directed networks. This methodological incompatibility becomes particularly pronounced in stochastic cases, where two critical challenges arise: i) the existence of solutions for stochastic systems under fully distributed protocols requires rigorous verification, and ii) the almost sure(a.s.) and m.s. consensus in stochastic environment need theoretical examination. These inherent complexity of conducting consensus analysis and protocol synthesis for directed graphs under stochastic conditions leaves this a challenging problem.
		At present, to the best of our knowledge, for stochastic MASs, no existing work has addressed the design of fully distributed control protocols for directed graphs while ensuring stochastic a.s. consensus, which is one of the motivations for our work.
		
		Notably, the implementation of fully distributed control protocols leads to path-dependent and highly nonlinear closed-loop dynamics \cite{li2013distributed_auto}, since the feedback involves integral functionals of the state history, which poses significant challenges for consensus analysis in stochastic MASs.
		For this case, it is necessary to prove the existence and uniqueness of the solution of closed-loop stochastic system in both undirected and directed graphs. However, although there have been some studies on fully distributed control of stochastic MASs under undirected graphs, no prior work has explored this fundamental issue, which is another motivation for this work.
		
		
		Based on the preceding analysis, the contributions of this work are summarized as follows:
		
		i) A unified fully distributed protocol is proposed to solve the m.s. and a.s. consensus problems for stochastic MASs under directed topology. To the best of our knowledge, this represents the first systematic study addressing fully distributed control for stochastic MASs under directed graphs, filling a notable gap in existing literature.
		
		ii) For stochastic MASs under undirected graphs, an improved fully distributed protocol with rigorous convergence guarantees is developed. In contrast to prior works on fully distributed stochastic consensus, our analysis provides explicit exponential convergence rate estimates for both m.s. and a.s. consensus, establishing clear relationships between convergence performance and system parameters.
		
		iii) By employing stochastic analysis tools including the functional $\mathrm{It\hat{o}} $ formula and stopping-time techniques, we rigorously establish the existence and uniqueness of solutions for a class of highly nonlinear, path-dependent stochastic systems.
		
		This paper is organized as follows. The problem formulation is provided in Section \ref{sec2}. Section \ref{sec-3-1} presents the existence and uniqueness of the solution to the path-dependent stochastic systems.
		The main results on the fully distributed protocol design for stochastic MASs are presented in Section \ref{sec-3-2}. In Section \ref{sec4}, two simulations are provided to validate the theoretical findings. Section \ref{sec5} summarizes the paper.
		
		\textbf{Notations:} $P^T$ represents the transposition of the $P$ matrix. $\mathbb{R}^{n\times m}$ is the set of all $n\times m$ real matrices. $P>0$ for $P\in \mathbb{R}^{n\times n}$ means that $P$ is a positive definite matrix. $\lambda_{\min}(P)$ and $\lambda_{\max}(P)$ represent the minimum and maximum eigenvalues of $P$. $I_n$ denotes the $n$-dimensional identity matrix. Let $(\Omega, {\mathcal{F}}, \mathbb{P})$ denote a complete probability space with a filtration $\{{\mathcal{F}}_t\}_{t\geq 0}$ satisfying the usual conditions. For $p, q \in R$, $p\wedge  q$ and $p\vee q$ represent $\min \left \{  p, q\right \} $ and $\max \left \{  p, q\right \} $. $\sigma_{\max}(A)$ denotes the maximum singular value of a matrix $A \in \mathbb{R}^{m \times n}$. $\mathbf{1}_N$ denotes the $N$-dimensional column vector with all ones.
		
		\section{Problem Formulation}\label{sec2}
		The information interaction among different agents can be modeled as an undirected graph $\bar{\mathcal{G}}=\{ \mathcal{V} ,\bar{\mathcal{E}} ,\bar{A }\}$ or a directed graph $\tilde{\mathcal{G}}=\{ \mathcal{V},\tilde{\mathcal{E}} ,\tilde{A }\}$.
		Among them, $\mathcal{V}=\{n_1,n_i,...,n_N\}$ represents the node set with $i$ being the $i$th agent.
		$\bar{A }(or \ \tilde{A })=[a_{ij}]_{N\times N}$ represents the adjacency matrix with $a_{ij}>0$ if $(n_i,n_j) \in \mathcal{E}(or \ \tilde{\mathcal{E}})$, otherwise $a_{ij}=0$.
		The set of agent $i$'s neighbors is represented as $N_{i}$, that is, for $j\in N_{i}$, $a_{ij}>0$. The Laplacian matrix $\mathcal{L}= [\mathcal{L}_{ij}] \in \mathbb{R}^{N\times N}$ of $\mathcal{G}$(or \ $\tilde{\mathcal{G}}$) is denoted as $\mathcal{L}_{ii}=\sum_{j\in N_{i}} a_{ij}$ and $\mathcal{L}_{ij}=- a_{ij}, i \ne j$.
		For an undirected graph $\bar{\mathcal{G}}$, $\bar{\mathcal{E}}=\{(n_i,n_j)|n_i,n_j\in \mathcal{V}\}$ represents the edge set and $(n_i,n_j) \in \mathcal{E}$ implies $(n_j,n_i) \in \mathcal{E}$ for any $n_i,n_j\in \mathcal{V}$. In addition, for a directed graph $\tilde{\mathcal{G}}$, $(n_i,n_j) \in \tilde{\mathcal{E}} $ represents the direction from $n_i$ to $n_j$.
		
		Consider the following stochastic MASs with $N$ nodes where the dynamic of each agent is modeled by
		\begin{equation}\label{2-1}
			\begin{split}
				dx_i(t)=[Ax_i(t)+Bu_i(t)]dt+Cx_i(t)dw(t),
			\end{split}
		\end{equation}
		where $x_i(t) \in \mathbb{R}^n$ and $u_i(t)\in \mathbb{R}^m$ are the state and control input of the $i$th agent, respectively. $i=1,2,...,N$, $A \in \mathbb{R}^{n\times n}$, $B  \in \mathbb{R}^{n\times m}$ and $C  \in \mathbb{R}^{n\times n}$ are constant matrices, and $w(t)$ is a standard Brownian motion defined on the complete probability space $(\Omega, {\mathcal{F}}, \mathbb{P})$.
		Let $x (t)=[x _1^T(t),...,x _{N}^T(t)]^{T}$ and $u(t)=[u_1^T(t),...,u_{N}^T(t)]^{T}$.
		
		The unified fully distributed protocol is given as
		\begin{eqnarray}\label{Section4-1}
			\begin{split}
				&u_i(t)=c_i(t) \Sigma_i(t) \mathcal{K} \xi_i(t),\\
				&\dot{c}_i(t)=e^{\gamma t}\xi_i^T(t)\Gamma \xi_i(t),
			\end{split}
		\end{eqnarray}
		where $\xi_i(t)=\sum_{j\in N_i} a_{ij}(x_i(t)-x _j(t))$, $c_i(t)\in \mathbb{R}$ denotes the time-varying adaptive gain, $c_i(0)> 0$, and $\Sigma_i(t)\in \mathbb{R}$ represents the auxiliary time-varying gain, which is designed as
		$$ \Sigma_i(t)= k_1\Big(k_2+\frac{\sigma_i(t)}{c_i(t)}\Big)^{\mu},$$ with  $\sigma_i(t)=\xi_i^T(t)P\xi_i^T(t)$, $\mu\ge 1, k_1, k_2>0 $. The feedback gain matrices $\mathcal{\mathcal{K} } \in \mathbb{R}^{m\times n}$, $\Gamma \in \mathbb{R}^{n\times n}$, $P \in \mathbb{R}^{n\times n}$ and constant $\gamma$ will be designed later in Section \ref{sec-3-2}.
		
		The goal is to design the control input $u(t)$ such that the consensus of the stochastic MASs \eqref{2-1} can be solved in m.s. or a.s. sense. The corresponding  definitions are as follows.
		\begin{definition} The m.s. (or a.s.) consensus of the stochastic MASs \eqref{2-1} can be solved by
			the protocol $u(t)$  if for any initial value $x(0)\in\mathbb{R}^{nN}$ and all distinct $i,j \in \mathcal{V}$, $\lim_{t \to \infty }\mathbb{E}\|x_i(t)-x _j(t)\|^2=0$ (or $\lim_{t \to \infty }\|x_i(t)-x _j(t)\|=0$, a.s.).
		\end{definition}
		
		Denote $x  (t)=[x _1^T(t),...,x _N^T(t)]^T$. Substituting the protocol \eqref{Section4-1} into \eqref{2-1} yields the following closed-loop system
		\begin{eqnarray}\label{Section4-2-linear}
			\begin{split}
				dx  (t)=&(I_N \otimes A  + (\tilde{C}(t)\tilde{\Sigma}(t)) \mathcal{L} \otimes B  \mathcal{K} )x  (t)dt\\&+(I_N \otimes C)x  (t)dw(t),
			\end{split}
		\end{eqnarray}
		where $\tilde{C}(t)=\mathrm{diag}\{c_1(t),...,c_N(t)\}$, and $\tilde{\Sigma}(t)=\mathrm{diag}\{\Sigma_1(t),...,\Sigma_N(t)\}$.
		Let $\vartheta_i(t)=x_i(t)-\frac{1}{N}\sum_{j=1}^{N} x _j(t)$.
		Denote $\vartheta(t)=[(I_N-\frac{1}{N}\mathbf{1}_N\mathbf{1}_N^T)\otimes I_n]x  (t)$, where $\vartheta(t)=[\vartheta_1^T(t),...,\vartheta_N^T(t)]^T$. Then, \eqref{Section4-2-linear} can be written as
		\begin{eqnarray}
			\begin{split}\label{Section4-3-linear}
				d\vartheta(t)=&(I_N \otimes A  +  (\tilde{C}(t)\tilde{\Sigma}(t)) \mathcal{L} \otimes B  \mathcal{K} )\vartheta(t)dt\\&+(I_N \otimes C)\vartheta(t)dw(t).
			\end{split}
		\end{eqnarray}
		
		%
		%
		
		The adaptive feedback law proposed in this paper introduces a historical integral term (e.g., $\int_0^t |r(s)x(s)|^p ds$ in \cite{mao1996stochastic}), resulting in a closed-loop system described by a path-dependent stochastic differential equation. Since the drift coefficient depends on the entire historical path of the solution, standard existence and uniqueness theorems under Lipschitz conditions no longer apply.
		Although deterministic counterparts can be treated with well-established methods, the stochastic path-dependent case remains theoretically underdeveloped. Therefore, as pointed in \cite{mao1996stochastic}, it is necessary to establish the existence and uniqueness of solutions for this class of path-dependent stochastic systems.
		
		\section{Existence and uniqueness of the solution to the path-dependent stochastic systems}\label{sec-3-1}
		Consider the following path-dependent stochastic system
		\begin{eqnarray}\label{lem-1}
			\begin{split}
				dy(t)\!=&\bigg[\!\int_{0}^{t} e^{\gamma s} y^T\!(s)Q_1y(s)ds )\bigg]\mathfrak{h}(y(t))dt+\mathfrak{f}(y(t))dt\\
				&+\mathfrak{g}(y(t))dw(t),
			\end{split}
		\end{eqnarray}
		where $y(t) \in \mathbb{R}^n$, $\gamma$ are constants, $Q_1 \in \mathbb{R}^{n\times n}$ is constant matrices with $Q_1 >0$, and $w(t)$ is a standard Brownian motion defined on the complete probability space $(\Omega, {\mathcal{F}}, \mathbb{P})$. The nonlinear term $\mathfrak{f}$, $\mathfrak{g}$, $\mathfrak{h}$ satisfy the following assumption.
		\begin{assumption}[Local Lipschitz Condition]\label{ass-local}
			For each $q>0$ and $\forall y_1, y_2 \in \mathbb{R}^n$ with $\|y_1\| \vee \|y_2\| <q$, there exists a constant $L_q > 0$ such that
			\begin{eqnarray}
				\begin{split}
					&\| \mathfrak{f}(y_1) - \mathfrak{f}(y_2) \| \vee \| \mathfrak{g}(y_1) - \mathfrak{g}(y_2) \| \vee \| \mathfrak{h}(y_1) - \mathfrak{h}(y_2) \|\\
					& \le L_q \| y_1 - y_2 \|.\notag
				\end{split}
			\end{eqnarray}
		\end{assumption}
		In the interest of consistency, we introduce the following functional It\^{o} formula \cite{dupire2019}.
		Define $y_t=\{y(\theta):0\le  \theta \le t\}$.
		Let $\overline{T} > 0$ be a fixed terminal time.
		For each $t \in [0,\overline{T}]$, denote by $\Lambda_t$ the set of cadlag functions from $[0,t]$ to $\mathbb{R}^n$, and define
		$\Lambda \equiv \bigcup_{t \in [0,\overline{T}]} \Lambda_t$.
		Let $y$ be a continuous semimartingale process, with $y(t)$ its value at time $t$ and $y_t \in \Lambda_t$ its path over $[0,t]$.
		Let $U: \Lambda \times [0,\overline{T}] \to \mathbb{R}$ be a smooth functional in the sense of Dupire, i.e., $U$ is $\Lambda$-continuous, twice continuously differentiable in the spatial direction (in the sense of $\mathcal{\nabla}_y$) and once continuously differentiable in the temporal direction (in the sense of $\mathcal{\nabla}_t$), with these derivatives themselves $\Lambda$-continuous.
		Then the functional It\^{o} formula (Theorem 3.1 of \cite{dupire2019}) states that for any $t \in [0,\overline{T}]$:
		\begin{equation}
			\begin{split}
				U(y_t,t) &= U(y_0,0) + \int_0^t \mathcal{\nabla}_y U(y_s,s) \, dy(s) \\
				&\quad + \int_0^t \mathcal{\nabla}_t U(y_s,s) \, ds + \frac{1}{2} \int_0^t \mathcal{\nabla}_{yy} U(y_s,s) \, d\langle y \rangle_s,\notag
			\end{split}
		\end{equation}
		where $\mathcal{\nabla}_y U(y_t,t)$ and $\mathcal{\nabla}_{yy} U(y_t,t)$ are the functional space derivatives, $\mathcal{\nabla}_t U(y_t,t)$ is the functional time derivative, and $d\langle y \rangle_t$ is the quadratic variation of $y$. For precise definitions, see \cite{dupire2019}.
		
		Based on the above functional It\^{o} formula in \cite{dupire2019}, for path-dependent stochastic system \eqref{lem-1} and a functional $U: \Lambda \times [0,\overline{T}] \to \mathbb{R}$,
		we have
		\begin{equation}
			\begin{split}
				U(y_t,t) =& U(y_0,0)+\int_0^t \mathfrak{L} U(y_s,s) ds\\
				&+\int_0^t \mathcal{\nabla}_yU(y_s,s)\mathfrak{g}(y(s))dw(s),\notag
			\end{split}
		\end{equation}
		where $\mathfrak{L} U(y_t,t) $ is defined as
		$\mathfrak{L} U(y_t,t) = \mathcal{\nabla}_tU(y_t,t)+\mathcal{\nabla}_yU(y_t,t)\bar{\mathfrak{f}}(y_t,t)
		+\frac{1}{2}{\rm{trace}}[\mathfrak{g}^T(y(t))\nabla_{yy}U(y_t,t)\mathfrak{g}(y(t))]$ with
		$\bar{\mathfrak{f}}$ depends on the whole path $y_t$, $\bar{\mathfrak{f}}(y_t,t)=\big[\int_{0}^{t} e^{\gamma s} y^T(s)Q_1y(s)ds )\big]\mathfrak{h}(y(t))+\mathfrak{f}(y(t))$.
		
		To establish the existence and uniqueness of the solution to \eqref{lem-1}, the following lemma is introduced.
		\begin{lemma}\label{lemma-1}
			Assume that there exist a functional $U: \Lambda \times [0,\overline{T}] \to \mathbb{R}$ and four constants $\alpha \ge 0$, $\beta \ge 0$, $k\le 2$ and $\lambda_1\ge 0$ such that\\
			(i) $e^{-\gamma t} \big(\!\!\int_{0}^{t} e^{\gamma s} y^T\!(s)Q_1y(s)ds-\beta\big)^2 \vee \lambda_1|y(t)|^2 \le k U(y_t,t)$;\\
			(ii) $\mathfrak{L} U(y_t,t) \!\le \!\!-\alpha U(y_t,t) \!+\phi(t)$,\\
			where $\phi(t)\!=\!\frac{\alpha}{2}e^{-\gamma t}\big(\!\int_{0}^{t} e^{\gamma s}y^T\!(s)Q_1y(s)ds\!-\!\beta\big)^2$, $\gamma$ and $Q_1$ are defined in \eqref{lem-1}. Then the path-dependent stochastic system \eqref{lem-1} exists a unique global solution for any initial value $y(0)\in\mathbb{R}^{n}$.
		\end{lemma}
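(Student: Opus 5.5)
The plan follows the classical Khasminskii-type argument, as in \cite{mao1996stochastic}: first obtain a unique local solution up to an explosion time, then use the functional $U$ to show that explosion cannot occur.

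\textbf{Step 1: the augmented system.} I would make the system Markovian by introducing the auxiliary scalar
\[
z(t)=\int_0^t e^{\gamma s}y^T(s)Q_1y(s)\,ds .
\]
Then $(y,z)$ solves the SDE
\[
dy=[z\,\mathfrak{h}(y)+\mathfrak{f}(y)]\,dt+\mathfrak{g}(y)\,dw,\qquad dz=e^{\gamma t}y^TQ_1y\,dt,\qquad z(0)=0 .
\]
By Assumption~\ref{ass-local}, the drift and diffusion of this SDE are locally Lipschitz in $(y,z)$, uniformly for $t$ in compact intervals; the factor $e^{\gamma t}$ is bounded there. The standard local existence theorem (e.g. \cite{mao2007stochastic}) therefore gives a unique maximal local solution on $[0,\tau_e)$, where $\tau_e$ is the explosion time. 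Uniqueness of $y$ for \eqref{lem-1} is equivalent to uniqueness of $(y,z)$, since $z$ is determined by the path of $y$.

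\textbf{Step 2: stopping times and the bound on $U$.} Fix $k_0$ with $\|y(0)\|<k_0$. For $q\ge k_0$ define
\[
\tau_q=\inf\{t\in[0,\tau_e):\ \|y(t)\|\ge q \ \text{or}\ |z(t)|\ge q\}.
\]
Then $\tau_q$ is increasing, and $\tau_\infty=\lim_{q\to\infty}\tau_q\le\tau_e$. Since $\tau_q=\tau_e$ can occur only on $\{\tau_e=\infty\}$, it suffices to show $\tau_\infty=\infty$ a.s.

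Apply the functional It\^{o} formula of \cite{dupire2019} to $e^{\alpha t}U(y_t,t)$ on $[0,t\wedge\tau_q]$. The stochastic integral is a true martingale there because the integrand is bounded. Using (ii),
\[
\mathfrak{L}\big(e^{\alpha t}U\big)\le e^{\alpha t}\phi(t).
\]
Condition (i) gives $e^{-\gamma t}(z-\beta)^2\le kU$, so $\phi(t)\le \frac{\alpha k}{2}U(y_t,t)$. Taking expectations,
\[
\mathbb{E}\big[e^{\alpha(t\wedge\tau_q)}U_{t\wedge\tau_q}\big]\le U(y_0,0)+\frac{\alpha k}{2}\,\mathbb{E}\int_0^{t\wedge\tau_q}e^{\alpha s}U_s\,ds .
\]
Gronwall's inequality then yields
\[
\mathbb{E}\,U_{t\wedge\tau_q}\le C_T<\infty \quad\text{for } t\le T,
\]
with $C_T$ independent of $q$. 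The assumption $k\le2$ keeps the growth exponent at most $\alpha$; in any case it is finite on $[0,T]$, which is all that is needed here.

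\textbf{Step 3: ruling out explosion.} On the event $\{\tau_q\le T\}$, either $\|y(\tau_q)\|=q$ or $|z(\tau_q)|=q$. In the first case, (i) gives $U\ge \lambda_1 q^2/k$. In the second case, (i) gives
\[
U\ge e^{-|\gamma| T}(q-|\beta|)^2/k .
\]
Hence
\[
C_T\ge \mathbb{P}(\tau_q\le T)\,\min\big\{\lambda_1q^2,\ e^{-|\gamma|T}(q-|\beta|)^2\big\}/k ,
\]
so $\mathbb{P}(\tau_q\le T)\to0$ as $q\to\infty$. Since $T$ is arbitrary, $\tau_\infty=\infty$ a.s.

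\textbf{Main obstacle.} The delicate point is the coercivity of $U$ in both components. The bound $\lambda_1|y|^2\le kU$ is useless if $\lambda_1=0$, and $k$ could be nonpositive. So I would read (i) as requiring a positive lower bound and treat $\lambda_1>0$, $0<k\le 2$ as the meaningful case. If $\lambda_1=0$, I would instead control $\|y\|$ through the boundedness of $z$: $Q_1>0$ bounds $\int_0^t\|y\|^2ds$, and combining this with a direct moment estimate on $y$ over $[0,\tau_q]$ should exclude explosion of $y$ once $z$ stays finite. A second, technical point is justifying the functional It\^{o} formula for the augmented, stopped process. Here the augmentation helps: in the variables $(y,z,t)$, $U$ can be treated as an ordinary $C^{2,1}$ function, which reduces the argument to the classical It\^{o} formula.
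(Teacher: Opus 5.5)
Your main argument, read with $\lambda_1>0$ and $0<k\le 2$, is correct and is essentially the paper's Khasminskii-type proof. Both use a unique maximal local solution from the local Lipschitz condition, stopping times, It\^{o}'s formula combined with (ii), the bound $\phi\le\frac{\alpha k}{2}U$ from (i), and a Chebyshev-type estimate $\mathbb{P}(\tau_q\le T)\le C_T/\theta_q$ with $\theta_q$ a lower bound for $U$ at exit. The differences are technical. The paper stops only on $|y|$, applies Dupire's formula directly, and uses $k\le 2$ to discard $-\alpha(1-\frac{k}{2})\mathbb{E}\int_0^{\tau_h\wedge t}U\,ds$. This gives $\mathbb{E}U\le U(y_0,0)$ uniformly in $t$. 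You instead augment with $z$, stop on $(y,z)$, and apply Gronwall to $e^{\alpha t}U$. That works for any $k>0$ and gives only finite-horizon bounds, which suffice here. One caveat: treating $U$ as a $C^{2,1}$ function of $(y,z,t)$ is an extra structural assumption on $U$. For a general Dupire functional, replace ``the integrand is bounded'' by an additional localizing sequence for the stochastic integral, and pass to the limit using $U\ge 0$.

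Drop your fallback for $\lambda_1=0$. Bounded $z$ does not prevent $y$ from exploding, and the lemma is in fact false when $\lambda_1=0$. Here is a counterexample satisfying every hypothesis of the lemma.
\begin{itemize}
\item Take $n=1$, $\mathfrak{g}=\mathfrak{h}=0$, $\mathfrak{f}(y)=1+y^4$, $Q_1=1$, $\gamma=\beta=\alpha=0$, $k=1$, $\lambda_1=0$.
\item Let $z(t)=\int_0^t y^2(s)\,ds$ and $\Phi(y)=\int_0^y\frac{s^2}{1+s^4}\,ds$, so that $|\Phi|\le\frac{\pi}{2\sqrt{2}}$.
\item Set $U(y_t,t)=2\big(z(t)-\Phi(y(t))\big)^2+\frac{\pi^2}{4}$.
\end{itemize}
Along the dynamics, $\frac{d}{dt}\big(z-\Phi(y)\big)=y^2-\frac{y^2}{1+y^4}(1+y^4)=0$. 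Hence $\mathfrak{L}U=0$, and (ii) holds. Also $z^2\le 2(z-\Phi)^2+2\Phi^2\le U$, so (i) holds. Yet $\dot y=1+y^4$ blows up in finite time from every initial value, with $y\sim(T_e-t)^{-1/3}$, while $z\le\pi/\sqrt{2}$ stays bounded.

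The paper's proof carries the same tacit restriction. Its step $\lim_{|y|\to\infty}\inf_t U\ge\lim\lambda_1|y|^2=\infty$ needs $\lambda_1>0$, and also $k>0$. So $\lambda_1>0$, $0<k\le 2$ is exactly the hypothesis under which the statement holds. With it your proof is complete; the case $\lambda_1=0$ cannot be rescued.
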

		\begin{proof}
			By Assumption \ref{ass-local}, the drift and diffusion terms of the path-dependent stochastic system \eqref{lem-1} satisfy the local Lipschitz condition. Therefore, for any initial value $y(0)\in\mathbb{R}^{n}$, there exists a unique maximal local solution $y(t)$ on $t\in[0, T_{\infty})$ to \eqref{lem-1} due to the local Lipschitz continuity of the coefficient
			(see \cite{mao1996stochastic, nguyen2020stability}), where $T_{\infty}$ represents the explosion time.
			Let $h_0>0$. For each integer $h \ge h_0$, define the stopping time $\tau_h=\inf\{t\in[0, T_{\infty}): |y  (t)|\ge h\}$. Denote $\tau_\infty=\lim_{h\to \infty }\tau_h$, where $\tau_\infty \le T_\infty $ a.s. That is, if we can prove $\tau_\infty=\infty$ a.s., then $T_\infty=\infty$ follows, which means the existence and uniqueness of the solution.
			By the functional $\mathrm{It\hat{o} } $ formula and condition (ii), we have for any $h \ge h_0$
			\begin{eqnarray}\label{1-EV-1}
				\begin{split}
					&\mathbb{E} U(y_{\tau_h \wedge t},\tau_h \wedge t) - U(y_0,0) \\
					&\le \mathbb{E}\int_0^{\tau_h \wedge t }\left( -\alpha U(y_s,s)+\phi(s)\right)ds,
				\end{split}
			\end{eqnarray}
			with $\phi(s)=\frac{\alpha}{2}e^{-\gamma s}\big(\int_{0}^{s} e^{\gamma r} y^T(r)Q_1y(r)ds-\beta\big)^2$.
			Note that $\phi(t)=\frac{\alpha}{2}e^{-\gamma t}\big(\int_{0}^{t} e^{\gamma s}y^T(s)Q_1y(s)ds-\beta\big)^2\le \frac{\alpha k}{2} U(y_t,t)$ from condition (i). Therefore, substituting this into \eqref{1-EV-1} yields
			\begin{eqnarray}\label{1-EV-2}
				\begin{split}
					&\mathbb{E} U(y_{\tau_h \wedge t},\tau_h \wedge t) \\
					&\le U(y_0,0) -\alpha(1-\frac{k}{2}) \mathbb{E}\int_0^{\tau_h \wedge t }U(y_s,s)ds\\
					&\le U(y_0,0),
				\end{split}
			\end{eqnarray}
			since $\alpha \ge 0$ and $k\le 2$. Define $\theta_h=  \inf_{|y|\ge h,t\ge 0} {U}(y)$, then
			\begin{eqnarray}\label{1-EV-3}
				\begin{split}
					\theta_h\mathbb{P}(\tau_h\le t)\le U(y_0,0).
				\end{split}
			\end{eqnarray}
			Since $\lim_{|y| \to \infty}\inf_{t\ge 0} {U}(y_t,t) \ge \lim_{|y| \to \infty}\inf_{t\ge 0} \lambda_1|y|^2=\infty$ from condition (i), we have
			\begin{eqnarray}
				\begin{split}
					\theta_h \to \infty  \ as \ h \to \infty.\notag
				\end{split}
			\end{eqnarray}
			Combining with \eqref{1-EV-3}, ones get
			\begin{eqnarray}
				\begin{split}
					\lim_{h \to \infty}\mathbb{P}(\tau_h\le t)=0, \notag
				\end{split}
			\end{eqnarray}
			which implies, $\mathbb{P}(\tau_\infty\le t)=0$, namely $\mathbb{P}(\tau_\infty> t)=1$. Since $t\ge 0$ is arbitrary, $\mathbb{P}(\tau_\infty=\infty)=1$. Therefore, the existence and uniqueness of the solution can be proved.
		\end{proof}
		
		\begin{remark}
			Lemma \ref{lemma-1} established a sufficient condition for proving the existence of a global solution to the path-dependent stochastic system \eqref{lem-1}. Its core mechanism relies on constructing a Lyapunov functional $U(y_t,t)$ satisfying the inequality $\mathfrak{L} U(y_t,t) \leq -\alpha U(y_t,t) + \phi(t)$, where the path-dependent term $\phi(t)$ is instantaneously and pointwise bounded by $U(y_t,t)$ via the coupling condition (i). While this condition is pivotal to the proof, it constitutes the primary restriction, as it requires the path-dependent term $\phi(t) $ to be strictly scaled by the current state energy $U(y_t,t)$ at every moment.
			
		\end{remark}
		Based on this lemma, the subsequent consensus analysis in this paper is conducted under the premise of the existence and uniqueness of the solution, thereby ensuring the rigor of the theoretical analysis.
		
		\section{Fully distributed consensus control for stochastic MASs}\label{sec-3-2}
		In this section, the unified fully distributed protocol \eqref{Section4-1} is specifically designed to solve the consensus problem of stochastic MASs for the cases of undirected and directed graphs, respectively.
		
		As we all know that the algebraic Riccati equation is an important tool for the control design of linear deterministic systems \cite{bittanti1991riccati, 1995Robust}. For linear stochastic systems, we introduce the following stochastic algebraic Riccati equation(SARE) to design the controller parameters
		\begin{equation}\label{2-2}
			\begin{split}
				A^TP+PA-PBB^TP+C^TPC+I_n=0.
			\end{split}
		\end{equation}
		Note that the solvability of this SARE has been well investigated in \cite{rami2000linear}, which demonstrates that the existence and uniqueness of the positive definite solution $P$ is equivalent to that the corresponding stochastic system $dx  (t)=[A x  (t)+B  u(t)]dt+C   x  (t)dw(t)$ is m.s. stabilizable.
		So in this work, SARE \eqref{2-2} is always assumed to have a positive-definite solution $P$.
		
		\subsection{Fully distributed consensus control for stochastic MASs with the directed topology}
		\begin{lemma}[\cite{wang2007new}]\label{lem-decomposition}
			Let $\tilde{\mathcal{G}}$ be a directed graph that contains a spanning tree. Then the node set $\mathcal{V}$ can be partitioned into two subsets $\mathcal{V}_l$ and $\mathcal{V}_f$ satisfying the following properties:
			\begin{enumerate}
				\item[(i)] $\mathcal{V}_l \cup \mathcal{V}_f = \mathcal{V}$ and $\mathcal{V}_l \cap \mathcal{V}_f = \emptyset$;
				\item[(ii)] the subgraph induced by $\mathcal{V}_l$ is strongly connected;
				\item[(iii)] there is no edge from any node in $\mathcal{V}_l$ to any node in $\mathcal{V}_f$.
			\end{enumerate}
		\end{lemma}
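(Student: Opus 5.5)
The plan is to use the condensation of $\tilde{\mathcal{G}}$ into its strongly connected components (SCCs). Form the quotient digraph whose nodes are the SCCs of $\tilde{\mathcal{G}}$, with an edge from one component to another whenever some edge of $\tilde{\mathcal{G}}$ goes between them in that direction. This quotient is acyclic: a directed cycle among components would merge them into a single strongly connected set, contradicting maximality.

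The first step is to locate a source component. Since $\tilde{\mathcal{G}}$ contains a spanning tree, there is a root node $r$ from which every node is reachable along directed paths. Take $\mathcal{V}_l$ to be the SCC containing $r$, and set $\mathcal{V}_f=\mathcal{V}\setminus\mathcal{V}_l$. Properties (i) and (ii) then hold by construction: the two sets partition $\mathcal{V}$, and the subgraph induced by an SCC is strongly connected.

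The remaining work is property (iii), read with the edge convention that is consistent with a root reaching every node: no node of $\mathcal{V}_f$ can influence $\mathcal{V}_l$. I would argue by contradiction. Suppose some $v\in\mathcal{V}_f$ and $u\in\mathcal{V}_l$ are joined by an edge in the direction that would let $v$ reach $u$. Because $r$ reaches every node, $r$ reaches $v$, and $u$ reaches $r$ inside $\mathcal{V}_l$. This gives a directed cycle $r\to v\to u\to r$, so $v$ lies in the same SCC as $r$, which contradicts $v\in\mathcal{V}_f$. Hence the only edges between the two sets go in the root-to-follower direction, as claimed.

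The main obstacle is reconciling orientations. The paper writes $(n_i,n_j)$ as the direction from $n_i$ to $n_j$, but defines $a_{ij}>0$ as $j$ being a neighbor of $i$, meaning $i$ receives from $j$. The argument above goes through once the spanning-tree direction and the edge direction in (iii) are read consistently. Under the opposite reading, the same proof applies with all arrows reversed, and the root's SCC is then the unique sink component rather than the source. Edge cases are straightforward: if $\tilde{\mathcal{G}}$ is itself strongly connected, then $\mathcal{V}_f=\emptyset$ and (iii) holds trivially.
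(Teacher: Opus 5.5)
Your proof is correct. The paper does not prove this lemma; it imports it from \cite{wang2007new}, so there is no in-paper argument to compare against. Your strongly-connected-component argument is the standard one. Putting the root's component into $\mathcal{V}_l$, and turning any edge by which a follower would influence a leader into a closed walk through $r$ and $v$, is all that is needed. The acyclicity of the condensation is never actually used.

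Two remarks on what your write-up adds to the bare statement.

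First, your orientation caveat is necessary, not cosmetic. Under the paper's literal conventions, $(n_i,n_j)\in\tilde{\mathcal{E}}$ iff $a_{ij}>0$, so (iii) says exactly that the $(1,2)$ block of $\mathcal{L}$ vanishes. The spanning tree must then be read as every node having an edge-path to the root. That is your reversed reading, with $\mathcal{V}_l$ the sink component in the paper's edge direction, and your proof covers it.

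Second, (i)--(iii) by themselves hold for any finite digraph: take any sink component of the condensation. So the spanning-tree hypothesis is what makes your choice of $\mathcal{V}_l$ canonical. It forces the sink component to be unique, and it ensures every follower receives information from $\mathcal{V}_l$ along some directed path. This is what the paper implicitly relies on when it next asserts that $\mathcal{L}_{22}$ is a nonsingular $M$-matrix.
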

		
		By relabeling the nodes such that $\mathcal{V}_l = \{n_1,\dots,n_M\}$ and $\mathcal{V}_f = \{n_{M+1},\dots,n_N\}$ with $1 \le M \le N$, the Laplacian matrix $\mathcal{L} \in \mathbb{R}^{N \times N}$ of $\tilde{\mathcal{G}}$ can be written in the following block form:
		\[
		\mathcal{L} = \begin{bmatrix}
			\mathcal{L}_{11} & 0 \\
			\mathcal{L}_{21} & \mathcal{L}_{22}
		\end{bmatrix},
		\]
		where $\mathcal{L}_{11} \in \mathbb{R}^{M \times M}$ is the Laplacian matrix of the strongly connected subgraph induced by $\mathcal{V}_l$, $\mathcal{L}_{22} \in \mathbb{R}^{(N-M)\times(N-M)}$ is a nonsingular $M$-matrix, and $\mathcal{L}_{21}$ represents the connectivity between $\mathcal{V}_l$ and $\mathcal{V}_f$.
		
		\begin{lemma}[\cite{mei2016distributed}]\label{lem-strongly-connected}
			Let $\mathcal{L}_{11} \in \mathbb{R}^{M \times M}$ with $M\ge 1$ be the Laplacian matrix of the strongly connected subgraph induced by $\mathcal{V}_l$. Then there exists a positive diagonal matrix $R = \operatorname{diag}\{r_1,\dots,r_M\}$ such that
			$\tilde{\mathcal{L}}_{11} = R\mathcal{L}_{11} + \mathcal{L}_{11}^T R$ is symmetric positive semidefinite. Moreover, its eigenvalues can be written as
			$0 = \lambda_1(\tilde{\mathcal{L}}_{11}) < \lambda_2(\tilde{\mathcal{L}}_{11}) \le \cdots \le \lambda_M(\tilde{\mathcal{L}}_{11})$.
			In particular, we can choose $R$ as $r = [r_1,\dots,r_M]^T$ with $\sum_{i=1}^M r_i=1$ and $r^T \mathcal{L}_{11} = 0$.
			
		\end{lemma}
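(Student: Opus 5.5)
The plan is to derive the statement from Perron--Frobenius theory applied to $\mathcal{L}_{11}$, and then to obtain positive semidefiniteness and the eigenvalue structure of $\tilde{\mathcal{L}}_{11}$ by identifying it as the Laplacian of an undirected connected graph.

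\textbf{Step 1: construct $r$.} Since the subgraph induced by $\mathcal{V}_l$ is strongly connected (Lemma \ref{lem-decomposition}(ii)), $\mathcal{L}_{11}$ is irreducible. Its row sums vanish, so $\mathcal{L}_{11}\mathbf{1}_M=0$. Choose $s>\max_i (\mathcal{L}_{11})_{ii}$; then $sI_M-\mathcal{L}_{11}$ is a nonnegative irreducible matrix with $(sI_M-\mathcal{L}_{11})\mathbf{1}_M=s\mathbf{1}_M$, so its spectral radius is $s$ (a positive eigenvector forces the Perron root). By Perron--Frobenius, $s$ is a simple eigenvalue, and there is a left eigenvector $r$ with all $r_i>0$. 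Hence $r^T\mathcal{L}_{11}=0$, $\ker \mathcal{L}_{11}^T=\mathrm{span}\{r\}$, and after normalization $\sum_i r_i=1$. The case $M=1$ is trivial: $\mathcal{L}_{11}=0$, $R=1$, and $\tilde{\mathcal{L}}_{11}=0$.

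\textbf{Step 2: identify $\tilde{\mathcal{L}}_{11}$ as an undirected Laplacian.} Set $R=\mathrm{diag}(r)$. Symmetry of $R\mathcal{L}_{11}+\mathcal{L}_{11}^TR$ is immediate. Its off-diagonal entries are $-(r_ia_{ij}+r_ja_{ji})\le 0$. Its row sums are
\[
\tilde{\mathcal{L}}_{11}\mathbf{1}_M=R\mathcal{L}_{11}\mathbf{1}_M+\mathcal{L}_{11}^Tr=0+(r^T\mathcal{L}_{11})^T=0 .
\]
So $\tilde{\mathcal{L}}_{11}$ is exactly the Laplacian of the undirected weighted graph with weights $\tilde a_{ij}=r_ia_{ij}+r_ja_{ji}$. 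This gives the quadratic form $z^T\tilde{\mathcal{L}}_{11}z=\sum_{i<j}\tilde a_{ij}(z_i-z_j)^2\ge 0$, which is positive semidefiniteness, and $\lambda_1=0$ with eigenvector $\mathbf{1}_M$.

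\textbf{Step 3: simplicity of the zero eigenvalue.} If $z^T\tilde{\mathcal{L}}_{11}z=0$, then $z_i=z_j$ whenever $\tilde a_{ij}>0$, that is, whenever $a_{ij}>0$ or $a_{ji}>0$ (using $r_i>0$). The undirected graph underlying a strongly connected digraph is connected, so $z\in\mathrm{span}\{\mathbf{1}_M\}$. Hence $\lambda_2(\tilde{\mathcal{L}}_{11})>0$.

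\textbf{Main obstacle.} The only delicate step is the Perron--Frobenius argument in Step 1: obtaining strict positivity of $r$ and one-dimensionality of the left kernel. This relies on irreducibility of $\mathcal{L}_{11}$ through the shift $sI_M-\mathcal{L}_{11}$. Everything after that is direct computation.
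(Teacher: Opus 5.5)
Your argument is correct and complete. The paper gives no proof of its own and imports the lemma from \cite{mei2016distributed}; your route is the standard proof behind that citation: Perron--Frobenius applied to the irreducible shift $sI_M-\mathcal{L}_{11}$ gives a positive normalized left null vector $r$, and with $R=\operatorname{diag}\{r_1,\dots,r_M\}$ the matrix $R\mathcal{L}_{11}+\mathcal{L}_{11}^TR$ is the Laplacian of the connected undirected graph with weights $r_ia_{ij}+r_ja_{ji}$, which yields positive semidefiniteness and a simple zero eigenvalue (your separate handling of $M=1$ covers the degenerate case).
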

		
		\begin{lemma}[\cite{li2015designing}]\label{lem-nonsingular-M}
			Let $\mathcal{L}_{22} \in \mathbb{R}^{(N-M)\times(N-M)}$ with $M\le N$ be the nonsingular $M$-matrix. Then there exists a positive diagonal matrix $S = \operatorname{diag}\{s_1,\dots,s_{N-M}\}$ such that $\tilde{\mathcal{L}}_{22} = S\mathcal{L}_{22} + \mathcal{L}_{22}^T S$ is symmetric positive definite. Consequently, its eigenvalues satisfy $0 < \lambda_1(\tilde{\mathcal{L}}_{22}) \le \lambda_2(\tilde{\mathcal{L}}_{22}) \le \cdots \le \lambda_{N-M}(\tilde{\mathcal{L}}_{22})$.
			A particular choice of matrix $S$ is $s = [s_1,\dots,s_{N-M}]^T = (\mathcal{L}_{22}^T)^{-1}\mathbf{1}_{N-M}$.
		\end{lemma}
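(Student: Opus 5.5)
The approach is to construct $S$ explicitly as $s=(\mathcal{L}_{22}^T)^{-1}\mathbf{1}_{N-M}$ and show that $\tilde{\mathcal{L}}_{22}=S\mathcal{L}_{22}+\mathcal{L}_{22}^TS$ is a symmetric, strictly diagonally dominant matrix with positive diagonal. Positive definiteness and the eigenvalue ordering then follow at once. Symmetry is automatic from the definition, and the ordering of eigenvalues is only a labelling convention, so the whole content is $\tilde{\mathcal{L}}_{22}>0$ together with positivity of $S$.

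First I check that $S$ is a positive diagonal matrix. Since $\mathcal{L}_{22}$ is a nonsingular $M$-matrix, so is $\mathcal{L}_{22}^T$, and hence $(\mathcal{L}_{22}^T)^{-1}\ge 0$ entrywise. Nonsingularity means no row of this inverse is identically zero. Therefore $s=(\mathcal{L}_{22}^T)^{-1}\mathbf{1}_{N-M}$ has all entries strictly positive, and $S=\operatorname{diag}\{s\}$ is positive diagonal.

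Next comes the sign structure. $\mathcal{L}_{22}$ is a principal submatrix of a Laplacian, so it has nonnegative diagonal and nonpositive off-diagonal entries. Because $S$ is positive diagonal, $\tilde{\mathcal{L}}_{22}$ is a symmetric $Z$-matrix: its off-diagonal entries $s_i(\mathcal{L}_{22})_{ij}+s_j(\mathcal{L}_{22})_{ji}$ are $\le 0$.

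The key computation is the row sums. Writing $\tilde{\mathcal{L}}_{22}\mathbf{1}=S\mathcal{L}_{22}\mathbf{1}+\mathcal{L}_{22}^Ts$, the second term equals $\mathbf{1}$ by construction of $s$. The first term has entries $s_i(\mathcal{L}_{22}\mathbf{1})_i\ge 0$, because each row sum of $\mathcal{L}_{22}$ equals the $i$th row sum of $\mathcal{L}$ (which is zero) minus the $\mathcal{L}_{21}$ part, and $-\mathcal{L}_{21}\ge 0$. Hence every row sum of $\tilde{\mathcal{L}}_{22}$ is at least $1>0$.

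Combined with the $Z$-sign pattern, this gives strict diagonal dominance with positive diagonal: the diagonal entry exceeds the sum of the absolute values of the off-diagonal entries in its row. By Gershgorin's theorem every eigenvalue of the symmetric matrix $\tilde{\mathcal{L}}_{22}$ is then real and strictly positive, which yields $\tilde{\mathcal{L}}_{22}>0$ and $0<\lambda_1(\tilde{\mathcal{L}}_{22})\le\cdots\le\lambda_{N-M}(\tilde{\mathcal{L}}_{22})$.

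The only step needing care is the nonnegativity of the row sums of $\mathcal{L}_{22}$. It relies on the specific Laplacian block structure from Lemma \ref{lem-decomposition}, not just on the abstract $M$-matrix property. If one wanted the statement for a general nonsingular $M$-matrix, this step would instead be replaced by the standard characterization: there exists a positive vector $v$ with $\mathcal{L}_{22}v>0$, and one then takes $S=\operatorname{diag}\{w_i/v_i\}$ for a suitable positive vector $w$ with $w^T\mathcal{L}_{22}>0$.
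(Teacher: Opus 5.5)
Your proof is correct. The paper gives no argument of its own here; it simply imports the lemma from \cite{li2015designing}. Your chain of steps is the standard self-contained justification of the particular choice of $S$: $s>0$ because $(\mathcal{L}_{22}^T)^{-1}$ is entrywise nonnegative and nonsingular; $\tilde{\mathcal{L}}_{22}$ is a symmetric $Z$-matrix; and $\tilde{\mathcal{L}}_{22}\mathbf{1}_{N-M}=S\mathcal{L}_{22}\mathbf{1}_{N-M}+\mathbf{1}_{N-M}\ge \mathbf{1}_{N-M}$ gives strict diagonal dominance.

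Your closing caveat is more than a side remark, and you should state it firmly. Read literally (``let $\mathcal{L}_{22}$ be the nonsingular $M$-matrix''), the particular-choice clause is false for a general nonsingular $M$-matrix. So the fact $\mathcal{L}_{22}\mathbf{1}_{N-M}=-\mathcal{L}_{21}\mathbf{1}_M\ge 0$, inherited from the Laplacian, is essential and not merely convenient. For example,
\[
\mathcal{L}_{22}=\begin{bmatrix}1&-5\\0&1\end{bmatrix},\qquad s=(\mathcal{L}_{22}^T)^{-1}\mathbf{1}_2=\begin{bmatrix}1\\6\end{bmatrix},\qquad S\mathcal{L}_{22}+\mathcal{L}_{22}^TS=\begin{bmatrix}2&-5\\-5&12\end{bmatrix},
\]
and the last matrix has determinant $-1$, so it is not positive definite. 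Consistently, the first row sum of this $\mathcal{L}_{22}$ is negative, so it cannot arise as the follower block in Lemma \ref{lem-decomposition}.

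For the bare existence claim over general nonsingular $M$-matrices, your $v$/$w$ construction is the right one. Take $V=\mathrm{diag}(v)$, $W=\mathrm{diag}(w)$ and $S=WV^{-1}$. The congruent matrix $V\tilde{\mathcal{L}}_{22}V=W\mathcal{L}_{22}V+V\mathcal{L}_{22}^TW$ is a symmetric $Z$-matrix with row sums $W\mathcal{L}_{22}v+V\mathcal{L}_{22}^Tw>0$. It is therefore positive definite, and positive definiteness passes back to $\tilde{\mathcal{L}}_{22}$ by congruence.
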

		Let $\xi(t)=[\xi_1^T(t),...,\xi_N^T(t)]^T$ and $\tilde{\sigma}(t)=\mathrm{diag}\{\sigma_1(t),...,$ $\sigma_N(t)\}$. By Lemma \ref{lem-decomposition}, $\xi(t)$ can be partitioned into $\xi_{l}(t)=[\xi_{1}^T(t),\dots,$ $\xi_{M}^T(t)]^T$ and $\xi_f(t)=[\xi_{M+1}^T(t),\dots,$ $\xi_{N}^T(t)]^T$, $\vartheta(t)$ can be partitioned into $\vartheta_{l}(t)=$ $[\vartheta_{1}^T(t),$ $\dots,\vartheta_{M}^T(t)]^T \text{and}$ $\vartheta_f(t)$ $=[\vartheta_{M+1}^T(t),\dots,\vartheta_{N}^T(t)]^T$,  the adaptive gain $\tilde{C}(t)$ can be partitioned into $\tilde{C}_l(t)=\mathrm{diag}\{c_1(t),...,c_M(t)\}$ and $\tilde{C}_f(t)=\mathrm{diag}\{c_{M\!+\!1}(t),...,c_N(t)\!\}, \text{the auxiliary time-varying gain} \ \tilde{\Sigma}(t)$ can be partitioned into $\tilde{\Sigma}_l(t)=\mathrm{diag}\{\Sigma_1(t),...,\Sigma_M(t)\}$ and $\tilde{\Sigma}_f(t)=\mathrm{diag}\{\Sigma_{M+1}(t),...,\Sigma_N(t)\}$, $\tilde{\sigma}(t)$ can be partitioned into $\tilde{\sigma}_l(t)=\mathrm{diag}\{\sigma_1(t),...,\sigma_M(t)\}$ and $\tilde{\sigma}_f(t)=\mathrm{diag}\{\sigma_{M+1}(t),...,\sigma_N(t)\}$, corresponding to the two subsets $\mathcal{V}_l$ and $\mathcal{V}_f$.
			Based on the closed-loop stochastic system \eqref{Section4-3-linear}, we obtain
			$d\vartheta_l(t)=(I_M \otimes A  +  (\tilde{C}_l(t)\tilde{\Sigma}_l(t) \mathcal{L}_{11}) \otimes B  \mathcal{K} )\vartheta_l(t)dt+(I_M \otimes C)\vartheta_l(t)dw(t)$, and $d\vartheta_f(t)=(I_{N-M} \otimes A  +  (\tilde{C}_f(t)\tilde{\Sigma}_f(t)\mathcal{L}_{22}) \otimes B  \mathcal{K} )\vartheta_f(t)dt+  (\tilde{C}_f(t)\tilde{\Sigma}_f(t)\mathcal{L}_{21} \otimes B  \mathcal{K} )\vartheta_l(t)dt+(I_{N-M} \otimes C)\vartheta_f(t)dw(t)$.
		Note that $\xi_l(t)=(\mathcal{L}_{11}\otimes I_n )\vartheta_l(t)$ and $\xi_f(t)=(\mathcal{L}_{21}\otimes I_n )\vartheta_l(t)+(\mathcal{L}_{22}\otimes I_n )\vartheta_f(t)$. Then, we can get
		\begin{eqnarray}\label{xi-l}
			\begin{split}
				d\xi_l(t)=&(I_M \otimes A  +  (\tilde{C}_l(t)\tilde{\Sigma}_l(t) \mathcal{L}_{11}) \otimes B  \mathcal{K} )\xi_l(t)dt\\&+(I_M \otimes C)\xi_l(t)dw(t),
			\end{split}
		\end{eqnarray}
		and
		\begin{eqnarray}\label{xi-f}
			\begin{split}
				d\xi_f(t)=&(I_{N-M} \otimes A  +  (\tilde{C}_f(t)\tilde{\Sigma}_f(t)\mathcal{L}_{22}) \otimes B  \mathcal{K} )\xi_f(t)dt\\&+  (\tilde{C}_l(t)\tilde{\Sigma}_l(t)\mathcal{L}_{21} \otimes B  \mathcal{K} )\xi_l(t)dt\\
				&+(I_{N-M} \otimes C)\xi_f(t)dw(t).
			\end{split}
		\end{eqnarray}
		For the case of $\mu>1$ and $\gamma=0$, we have the following theorem.
		\begin{theorem}\label{theorem-1}
			Suppose the directed graph $\tilde{\mathcal{G}}$ contains a spanning tree. The m.s. and a.s. consensus of \eqref{2-1} can be solved by the protocol \eqref{Section4-1} with $\gamma=0$, $\mathcal{\mathcal{K} } =-B  ^TP$, and $\Gamma=PB  B  ^TP$, where $P$ is the solution to the SARE \eqref{2-2}.
			The auxiliary time-varying gain $\Sigma_i(t)$ satisfies \\
			(I) $\Sigma_i(t)= k_1(k_2+\frac{\sigma_i(t)}{c_i(t)})^{\mu}$, with $\mu>1, k_1, k_2\ge 1$;\\
			(II) $c_i(0)\ge 1$, $\Sigma_i(0)\ge 1$.
		\end{theorem}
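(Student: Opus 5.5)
We use the leader/follower decomposition of Lemma \ref{lem-decomposition} and treat the dynamics \eqref{xi-l} and \eqref{xi-f} separately. The standard candidate for this kind of adaptive directed-graph protocol (in the spirit of \cite{li2013distributed_tac}) is a sum of per-node Lyapunov functions.

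\textbf{Leader functional.} For the strongly connected block we would take
\begin{equation*}
V_l(t)=\sum_{i=1}^{M} r_i\int_{0}^{\sigma_i(t)/c_i(t)}\Big(k_1(k_2+s)^{\mu}\Big)\,ds+\sum_{i=1}^{M}\frac{\hat\beta r_i}{2}\,(c_i(t)-\beta)^2 ,
\end{equation*}
with $r_i$ from Lemma \ref{lem-strongly-connected} and constants $\hat\beta,\beta$ to be chosen. A simpler alternative is $\sum_i r_i\frac{k_1}{\mu+1}(k_2+\sigma_i/c_i)^{\mu+1}$ plus the gain term.

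The follower functional $V_f$ is built the same way, with weights $s_i$ from Lemma \ref{lem-nonsingular-M}. The total is $V=V_l+\varepsilon V_f$. Conditions (I)--(II), namely $k_1,k_2\ge1$, $c_i(0)\ge1$ and $\Sigma_i(0)\ge 1$, together with $\dot c_i\ge0$, give $c_i(t)\ge1$ and $\Sigma_i(t)\ge1$ for all $t$. These lower bounds are used repeatedly.

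\textbf{Generator computation.} Apply It\^o's formula to $\sigma_i/c_i$. Use $\mathcal K=-B^TP$ and $\Gamma=PBB^TP$, and substitute the SARE \eqref{2-2}, which turns the $A^TP+PA+C^TPC$ terms into $-\xi_i^T\xi_i+\xi_i^TPBB^TP\xi_i$. Because $\dot c_i=\xi_i^T\Gamma\xi_i\ge0$, the term $-\sigma_i\dot c_i/c_i^2$ is nonpositive and can be dropped or used.

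The coupling term involves $c_i\Sigma_i$ times the row $\mathcal L_{11}$ acting on $\xi_l$. After weighting by $r_i\Sigma_i/c_i$ (the derivative of the integrand), the factors $c_i$ cancel. The remaining cross terms $\Sigma_i\Sigma_j$ are where the superlinear power $\mu>1$ enters. Young's inequality bounds the off-diagonal products by diagonal terms $\Sigma_i^2\xi_i^TPBB^TP\xi_i$. Because $\Sigma_i$ grows like $(\sigma_i/c_i)^\mu$, these negative diagonal terms dominate both the cross terms and the positive $\xi_i^TPBB^TP\xi_i$ terms once $\Sigma_i$ is large. Where $\Sigma_i$ is small, the gain term $\hat\beta r_i(c_i-\beta)\xi_i^T\Gamma\xi_i$ compensates, with $\beta$ large. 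The goal is the bound
\begin{equation*}
\mathfrak L V\le -\delta\sum_{i}\|\xi_i\|^2
\end{equation*}
for some $\delta>0$. The follower block is handled analogously, with $\varepsilon$ small so that the input from $\xi_l$ through $\mathcal L_{21}$ is absorbed.

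\textbf{Well-posedness and consensus.} Existence and uniqueness of the path-dependent closed loop follows from Lemma \ref{lemma-1}, with $U=V$ and $\gamma=0$, since $c_i(t)=c_i(0)+\int_0^t\xi_i^T\Gamma\xi_i\,ds$ has exactly the form in \eqref{lem-1}. The quadratic gain term supplies condition (i), and the bound above supplies (ii) with $\alpha=0$.

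Then $V(t)+\delta\int_0^t\|\xi\|^2ds-M(t)$ is a nonnegative supermartingale, where $M(t)$ is the It\^o integral from the functional It\^o formula. By the nonnegative semimartingale convergence theorem, $V$ converges a.s., each $c_i$ converges to a finite limit a.s., and $\int_0^\infty\|\xi\|^2ds<\infty$ both a.s. and in expectation. Since $\mathbb E\|\xi\|^2$ is integrable and its derivative is bounded by the SDE growth, a Barbalat-type argument gives $\mathbb E\|\xi\|^2\to0$. A similar argument on paths, using the martingale part and the a.s. boundedness of $c_i$ and $\Sigma_i$, gives $\xi\to0$ a.s. Because $\mathcal L$ has a simple zero eigenvalue when the graph contains a spanning tree, $\xi\to0$ implies $x_i-x_j\to0$ in both senses.

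\textbf{Main obstacle.} The hardest step is the generator bound, specifically controlling the non-symmetric cross terms $\Sigma_i\Sigma_j\xi_i^TPBB^TP\xi_j$. The per-node scaling $\sigma_i/c_i$ breaks the symmetrization that $R\mathcal L_{11}+\mathcal L_{11}^TR$ would otherwise give. We would first try to show that the Hessian-like terms in $V$ produce a quadratic form in the vectors $\Sigma_i B^TP\xi_i$ with matrix $-(R\mathcal L_{11}+\mathcal L_{11}^TR)$, using $\Sigma_i\ge1$ and $\mu>1$. If this fails, we would bound the cross terms crudely by $\lambda_{\max}$-weighted diagonal terms and absorb them using the superlinear growth of $\Sigma_i$ plus a large $\beta$. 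Passing from convergence in expectation to a.s. convergence for $\xi$ is the secondary difficulty.
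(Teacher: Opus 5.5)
Your overall architecture matches the paper's: the leader/follower split with weights $r_i,s_i$, a small follower weight, and then Lemma \ref{lemma-1}, semimartingale convergence, Barbalat and LaSalle. The gap is in the generator bound. It fails for the per-node functional you chose, exactly at the ``main obstacle'' you flag.

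Differentiate $\xi_i=\sum_j\mathcal L_{ij}x_j$ with $u_j=c_j\Sigma_j\mathcal K\xi_j$. The drift of $\xi_i$ ($i\le M$) is $A\xi_i-BB^TP\sum_j(\mathcal L_{11})_{ij}c_j\Sigma_j\xi_j$, so each neighbour enters with its \emph{own} gain. Your term $r_iG(\sigma_i/c_i)$, with $G(u)=\int_0^u k_1(k_2+s)^\mu ds$ (and equally the $(\mu+1)$-power variant), has $\partial/\partial\sigma_i=r_i\Sigma_i/c_i$. The coupling part of $\mathfrak L V_l$ is therefore
\[
-2\sum_{i,j=1}^{M} r_i(\mathcal L_{11})_{ij}\,\frac{c_j}{c_i}\,\Sigma_i\Sigma_j\,\xi_i^T\Gamma\xi_j ,
\]
and the $c$'s cancel only for $j=i$.

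This form is not sign-definite on the admissible set $(r^T\otimes I_n)\xi_l=0$. Take the directed $3$-cycle $a_{12}=a_{23}=a_{31}=1$ (so $r_i=\frac{1}{3}$), $\xi_l=t(1,1,-2)^T\otimes v$ with $B^TPv\neq0$ and $t$ small (so $\Sigma_i\approx k_1k_2^{\mu}$), and gains $c=(1,K,K)$. To leading order the form equals $\frac{2}{3}k_1^2k_2^{2\mu}t^2(K-8-\frac{2}{K})\,v^T\Gamma v$. This is positive for $K\ge 9$ and grows linearly in the gain ratio $K$. Such ratios are not a priori bounded, since $c_i(0)\ge 1$ is already arbitrary.

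Your fallback cannot absorb this term, for three reasons.
\begin{itemize}
\item The example sits where every $\Sigma_i$ is at its minimum, so superlinear growth of $\Sigma_i$ is irrelevant. Cross and diagonal terms are of degree two in $\Sigma$ anyway.
\item The gain term $\hat\beta r_i(c_i-\beta)\xi_i^T\Gamma\xi_i$ is itself positive there unless $\beta$ is of order $K$.
\item Even with all $c_i$ equal, $R\mathcal L_{11}+\mathcal L_{11}^TR$ has zero row sums, so a Young or diagonal-dominance estimate leaves no strict margin. Negativity can only come from its algebraic connectivity on $(r^T\otimes I_n)\xi_l=0$, and that requires a genuinely symmetric form.
\end{itemize}

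The missing idea is to weight each node so that its $\sigma_i$-derivative is exactly its own feedback gain. The paper uses $\eta_1^lr_ic_i\int_0^{\sigma_i}k_1(k_2+s/c_i)^\mu ds=\eta_1^lr_ic_i^2G(\sigma_i/c_i)$, i.e.\ your term multiplied by $c_i^2$. Its $\sigma_i$-derivative is $\eta_1^lr_ic_i\Sigma_i$. The coupling then collapses to $-\eta_1^l\xi_l^T(D\tilde{\mathcal L}_{11}D\otimes\Gamma)\xi_l$ with $D=\tilde C_l\tilde\Sigma_l$. This is a symmetric form in the vectors $c_i\Sigma_iB^TP\xi_i$, not $\Sigma_iB^TP\xi_i$. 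The $\mathcal L_{21}$ coupling is then handled with $\tilde{\mathcal L}_{22}>0$ and small $\eta_1^f$, as you anticipated.

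The price is that the $c_i$-derivative is no longer nonpositive. It contributes up to $\eta_1^lr_i\sigma_i\Sigma_i\dot c_i\le\eta_1^lr_i(c_i\Sigma_i)^{1+1/\mu}\xi_i^T\Gamma\xi_i$, using $\sigma_i\le c_i\Sigma_i^{1/\mu}$ (from $k_1,k_2\ge1$) and $c_i\ge 1$. This, not the cross terms, is where $\mu>1$ enters. Since $1+\frac{1}{\mu}<2$, Young's inequality and large $\psi_i$ absorb it, together with $2\eta_2^lr_ic_i\dot c_i$, into $-(c_i\Sigma_i)^2\xi_i^T\Gamma\xi_i$. Completing the square, $-\eta_3^l(c_i\Sigma_i)^2-\check\psi_i\le-2\sqrt{\eta_3^l\check\psi_i}\,c_i\Sigma_i$, then supplies the $-c_i\Sigma_i\xi_i^T\Gamma\xi_i$ that the SARE converts into $-c_i\Sigma_i\|\xi_i\|^2$. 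That $\mu>1$ plays no role in your functional is a symptom of the wrong weighting.

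The factor $c_i^2$ also gives $c_i^2G(\sigma_i/c_i)\ge\sigma_i$. This is the lower bound needed for condition (i) of Lemma \ref{lemma-1}, and $G(\sigma_i/c_i)$ alone does not provide it uniformly in $c_i$.
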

		
		\begin{proof}
			Based on the closed-loop stochastic system \eqref{Section4-3-linear}, \eqref{xi-l} and \eqref{xi-f}, we choose the following Lyapunov functional
			\begin{align}
				V(\xi(t), t) =V_l(\xi(t), t)+V_f(\xi(t), t)\notag
			\end{align}
			with $V_l(\xi(t), t)=\eta_1^l\sum_{i=1}^{M} r_i c_i(t)\int_0^{\sigma_i(t)}k_1(k_2+\frac{s}{c_i(t)})^{\mu}ds+  $ $\eta_2^l\sum_{i=1}^{M} r_i (c_i(t)-\psi_i)^2$ and $V_f(\xi(t), t)=\eta_1^f\sum_{i=M+1}^{N} s_{i-M}$ $ c_i(t)\int_0^{\sigma_i(t)}k_1(k_2+\frac{s}{c_i(t)})^{\mu}ds+  \eta_2^f\sum_{i=M+1}^{N}s_{i-M} (c_i(t)-\psi_i)^2$, where $\sigma_i(t)=\xi_i^T(t)P\xi_i^T(t)$, $\eta_1^l,\eta_2^l,\eta_1^f,\eta_2^f\ge 0$, $r_i,s_i>0$, and $\psi_i$ are positive constants to be determined.
			According to the functional $\mathrm{It\hat{o} } $ formula and condition (I), it can be deduced that $d V(\xi(t),t)$ has the following form
			\begin{eqnarray}
				\begin{split}
					dV&(\xi(t),t)\!=\!\mathfrak{L}V(\xi(t),t) dt\!+\! \frac{\partial V(\xi(t),t)}{\partial \xi}\!(I_N\!\otimes\! C   )\xi(t)dw(t),\notag
				\end{split}
			\end{eqnarray}
			where $\mathfrak{L}V(\xi(t),t)$ is defined as
			\begin{eqnarray}\label{Sectionv2-1-linear}
				\begin{split}
					\mathfrak{L} V(\xi(t),t)=\mathfrak{L}V_l(\xi(t),t)+\mathfrak{L}V_f(\xi(t),t)
				\end{split}
			\end{eqnarray}
			with 
				\begin{equation}
				\begin{split}\mathfrak{L} V_l(\xi(t),t)=&\eta_1^l\sum_{i=1}^{M}r_i\dot{c}_i(t)\int_0^{\sigma_i(t)}k_1(k_2+\frac{s}{c_i(t)})^{\mu}ds\\
					&+2\eta_2^l\sum_{i=1}^{M}r_i(c_i(t)-\psi_i)\dot{c}_i(t) \\
					&+\eta_1^l\sum_{i=1}^{M}r_i c_i(t)\Sigma_i(t)\mathfrak{L}\sigma_i(t)\\
					&-\int_0^{\sigma_i(t)} \mu k_1 \frac{s}{c_i^2(t)}\dot{c}_i(t) ( k_2 + \frac{s}{{c}_i(t)})^{\mu-1} ds,\notag
				\end{split}
			\end{equation}
			 and 
			 \begin{equation}
			 	\begin{split}
			 		\mathfrak{L} V_f(\xi(t),t)&=\eta_1^f\sum_{i=M+1}^{N}\!\!\!s_{i-M}\dot{c}_i(t)\!\!\int_0^{\sigma_i(t)}\!\!k_1(k_2+\frac{s}{c_i(t)})^{\mu}ds\\
			 		&+2\eta_2^f\sum_{i=M+1}^{N}s_{i-M}(c_i(t)-\psi_i)\dot{c}_i(t)\\
			 		&+\eta_1^f\sum_{i=M+1}^{N}s_{i-M} c_i(t)\Sigma_i(t)\mathfrak{L}\sigma_i(t)\\
			 		&-\int_0^{\sigma_i(t)} \mu k_1 \frac{s}{c_i^2(t)}\dot{c}_i(t) ( k_2 + \frac{s}{{c}_i(t)})^{\mu-1} ds.\notag
			 	\end{split}
			 \end{equation} Since $\mu, k_1, k_2>0, \sigma_i(t)\ge 0, \ \text{we have}\  \mathfrak{L}V_l(\xi(t),t) \le$ $\sum_{i=1}^{M}r_i\big[\eta_1^l\Sigma_i(t)\sigma_i(t)+2\eta_2^lc_i(t)-2\eta_2^l\psi_i\big]\dot{c}_i(t)+\eta_1^l\sum_{i=1}^{M}r_i$\\ $ c_i(t)\Sigma_i(t)\mathfrak{L}\sigma_i(t)$, and $\mathfrak{L} V_f(\xi(t),t)
			\le \sum_{i=M+1}^{N}s_{i-M}\big[\eta_1^f$ $\Sigma_i(t)\sigma_i(t)+2\eta_2^fc_i(t)-2\eta_2^f\psi_i\big]\dot{c}_i(t)+\eta_1^f\sum_{i=M+1}^{N}s_{i-M} c_i(t)$ $\Sigma_i(t)\mathfrak{L}\sigma_i(t)$.
			Let $\tilde{\psi}=\mathrm{diag}\{\psi_1,...,\psi_N\}$. According to Lemma \ref{lem-decomposition}, $\tilde{\psi}$ can be partitioned into $\tilde{\psi}_l=\mathrm{diag}\{\psi_1,...,\psi_M\}$ and $\tilde{\psi}_f=\mathrm{diag}\{\psi_{M+1},...,\psi_N\}$.
			Let $\mathcal{K} =-B  ^TP$ and $\Gamma=PB  B  ^TP$.
			Note that $\sum_{i=1}^{M}\sigma_i(t)=\xi_l^T(t)(I_{M}\otimes P)\xi_l(t)$ and $\xi_l(t)=(\mathcal{L}_{11} \otimes I_n)\vartheta_l(t)$. By the definition of $\sigma_i(t)$, \eqref{xi-l}, \eqref{xi-f} and using Lemma \ref{lem-decomposition} and \ref{lem-strongly-connected}, we have
			\begin{eqnarray}\label{VVV-1}
				\begin{split}
					&\sum_{i=1}^{M}r_i c_i(t)\Sigma_i(t)\mathfrak{L}\sigma_i(t)\\
					&
					= \xi_l^T\!(t)[R(\tilde{C}_l(t)\tilde{\Sigma}_l(t))\!\otimes\! (A ^T\!P\!+\!PA \!+\!C^T\!PC   )] \xi_l(t)\!\!\!\!\\ &\ \ \ -\!\xi_l^T(t)[(\tilde{C}_l(t)\tilde{\Sigma}_l(t))^2\tilde{\mathcal{L}}_{11}\otimes \Gamma] \xi_l(t),
				\end{split}
			\end{eqnarray}
			and
			\begin{eqnarray}\label{Sectionv2-5-linear}
				\begin{split}
					&\sum_{i=1}^{M}r_i[\eta_1^l\Sigma_i(t)\sigma_i(t)+2\eta_2^l c_i(t)-2\eta_2^l \psi_i]\dot{c}_i(t)\\
					&=\xi_l^T\!(t)[R(\eta_1^l\tilde{\Sigma}_l(t)\tilde{\sigma}_l(t)+2\eta_2^l(\tilde{C}_l(t)\! -\!\tilde{\psi}_l))\!\otimes\! \Gamma] \xi_l(t),
				\end{split}
			\end{eqnarray}
			Note that $\sum_{i=M+1}^{N}\sigma_i(t)=\xi_f^T(t)(I_{N-M}\otimes P)\xi_f(t)$. Then, using Lemma \ref{lem-decomposition}, \ref{lem-nonsingular-M}, and the elementary inequality: $2a^TOb\le \iota a^TOa+\frac{1}{\iota}b^TOb$, for any positive definite matrix $O\in \mathbb{R}^{n\times n} $, $a,b\in \mathbb{R}^{n}$ and $\iota>0 $, we have
			\begin{eqnarray}\label{VVV-1-1}
				\begin{split}
					&\sum_{i=M+1}^{N}s_{i-M} c_i(t)\Sigma_i(t)\mathfrak{L}\sigma_i(t)\\
					&
					= \xi_f^T\!(t)[S(\tilde{C}_f(t)\tilde{\Sigma}_f(t))\otimes\! (A ^T\!P\!+\!PA \!+\!C^T\!PC   )] \xi_f(t)\!\!\!\!\\ &\ \ \ -\!\xi_f^T(t)[(\tilde{C}_f(t)\tilde{\Sigma}_f(t))^2\tilde{\mathcal{L}}_{22}\otimes \Gamma] \xi_f(t)\\ &\ \ \ -\!2\xi_f^T\!(t)[(\tilde{C}_f(t)\tilde{\Sigma}_f(t))(\tilde{C}_l(t)\tilde{\Sigma}_l(t))S\mathcal{L}_{21}\!\otimes \! \Gamma] \xi_l(t)\\
					&
					\le \xi_f^T\!(t)[S(\tilde{C}_f(t)\tilde{\Sigma}_f(t))\otimes\! (A ^T\!P\!+\!PA \!+\!C^T\!PC   )] \xi_f(t)\!\!\!\!\\ &\ \ \ -\xi_f^T(t)[(\tilde{C}_f(t)\tilde{\Sigma}_f(t))^2\tilde{\mathcal{L}}_{22}\otimes \Gamma] \xi_f(t)\\ &\ \ \ +\iota\xi_f^T\!(t)[(\tilde{C}_f(t)\tilde{\Sigma}_f(t))^2\!\otimes \! \Gamma] \xi_f(t)\\ &\ \ \ +\frac{\sigma_{\max}^2(S\mathcal{L}_{21})}{\iota}\xi_l^T\!(t)[(\tilde{C}_l(t)\tilde{\Sigma}_l(t))^2\!\otimes \! \Gamma] \xi_l(t),
				\end{split}
			\end{eqnarray}
			and
			\begin{eqnarray}\label{Sectionv2-5-linear-1}
				\begin{split}
					&\sum_{i=M+1}^{N}s_{i-M}[\eta_1^f\Sigma_i(t)\sigma_i(t)+2\eta_2^f c_i(t)-2\eta_2^f \psi_i]\dot{c}_i(t)\\
					&=\xi_f^T\!(t)[S(\eta_1^f\tilde{\Sigma}_f(t)\tilde{\sigma}_f(t)\!+\!2\eta_2^f(\tilde{C}_f(t)\! -\!\tilde{\psi}_f))\!\otimes\! \Gamma] \xi_f(t),
				\end{split}
			\end{eqnarray}
			By substituting \eqref{VVV-1}-\eqref{Sectionv2-5-linear-1} into \eqref{Sectionv2-1-linear}, we can obtain
			\begin{eqnarray}\label{Sectionv2-5.1-linear}
				\begin{split}
					\mathfrak{L}& V(\xi(t),t)\\
					\le&\eta_1^l\xi_l^T\!(t)[R(\tilde{C}_l(t)\tilde{\Sigma}_l(t))\otimes (A ^T\!P\!+\!PA \!+\!C^T\!PC  )] \xi_l(t)\\
					&\!\!\!\!\!+\!\eta_1^f\xi_f^T\!(t)[S(\tilde{C}_f(t)\tilde{\Sigma}_f(t))\!\otimes\! (A ^T\!P\!+\!PA \!+\!C^T\!PC   )] \xi_f(t)\\ &+\xi_l^T\!(t)\!\Big[\Phi_t^l \otimes\! \Gamma\Big] \!\xi_l(t)+\xi_f^T\!(t)\!\Big[\Phi_t^f \otimes\! \Gamma\Big] \!\xi_f(t),
				\end{split}
			\end{eqnarray}
			with $\Phi_t^l=\eta_1^lR\tilde{\Sigma}_l(t)\tilde{\sigma}_l(t)+2\eta_2^lR(\tilde{C}_l(t) -\tilde{\psi}_l)+\eta_1^f\frac{\sigma_{\max}^2(S\mathcal{L}_{21})}{\iota}(\tilde{C}_l(t)\tilde{\Sigma}_l(t))^2-\eta_1^l\tilde{\mathcal{L}}_{11}(\tilde{C}_l(t)\tilde{\Sigma}_l(t))^2$ and $\Phi_t^f=\eta_1^fS\tilde{\Sigma}_f(t)\tilde{\sigma}_f(t)+2\eta_2^fS(\tilde{C}_f(t) -\tilde{\psi}_f)+ \eta_1^f\iota(\tilde{C}_f(t)\tilde{\Sigma}_f(t))^2-\eta_1^f\tilde{\mathcal{L}}_{22}(\tilde{C}_f(t)\tilde{\Sigma}_f(t))^2$.
			Note that $\Sigma_i(t)= k_1(k_2+\frac{\sigma_i(t)}{c_i(t)})^{\mu}$, for $\mu>1, k_1, k_2\ge 1$ and $c_i(t)\ge 1$, $\Sigma_i(t)\ge 1$ by condition (II). Therefore, we have $\tilde{\sigma}_l(t)\le \tilde{C}_l(t)\tilde{\Sigma}_l(t)^{\frac{1}{\mu}}$ and $\tilde{\sigma}_f(t)\le \tilde{C}_f(t)\tilde{\Sigma}_f(t)^{\frac{1}{\mu}}$. Together with the Young's inequality: $ab \le \epsilon  a^{p} + \frac{(\epsilon p)^{-q/p}}{q}  b^{q}$ with $p, q > 1$ and $\frac{1}{p} + \frac{1}{q} = 1$ for any $\epsilon > 0$ and $a, b \ge 0$, we have
			\begin{eqnarray}
				\begin{split}
					\Phi_t^l \le & \eta_1^l R (\tilde{C}_l(t)\tilde{\Sigma}_l(t))^{1+\frac{1}{\mu}}+2\eta_2^l R(\tilde{C}_l(t)\tilde{\Sigma}_l(t) -\tilde{\psi}_l)\\&-\Big(\eta_1^l\tilde{\mathcal{L}}_{11}-\eta_1^f\frac{\sigma_{\max}^2(S\mathcal{L}_{21})}{\iota}I_M\Big)(\tilde{C}_l(t)\tilde{\Sigma}_l(t))^2\\
					\le &\eta_1^l \Big(\epsilon_1  [(\tilde{C}_l(t)\tilde{\Sigma}_l(t))^{1+\frac{1}{\mu}}]^{p_1} + \frac{(\epsilon_1 p_1)^{-q_1/p_1}}{q_1}  R^{q_1}\Big)\\
					&+ 2\eta_2^l\Big(\epsilon_2  (\tilde{C}_l(t)\tilde{\Sigma}_l(t))^{p_2} + \frac{(\epsilon_2 p_2)^{-q_2/p_2}}{q_2}  R^{q_2}\Big)\\
					&-\! 2\eta_2^l R \tilde{\psi}_l \!-\!\!\Big(\!\eta_1^l\lambda_2(\tilde{\mathcal{L}}_{11})\!-\!\eta_1^f\frac{\sigma_{\max}^2(S\mathcal{L}_{21})}{\iota}\!\Big)\!(\tilde{C}_l(t)\tilde{\Sigma}_l(t))^2\!. \notag
				\end{split}
			\end{eqnarray}
			Note that we can choose $p_1$ and $p_2$ to satisfy $(1+\frac{1}{\mu})p_1\le 2$ and $1 < p_2\le 2$ since $\mu>1$ and $p_1,p_2>1$. Combining these conditions yields
			\begin{eqnarray}
				\begin{split}
					\Phi_t^l
					\!\le &\Big[\!\eta_1^l\epsilon_1\!+\!2\eta_2^l\epsilon_2\!-\!\!\Big(\!\eta_1^l \!\lambda_2(\tilde{\mathcal{L}}_{11})\!-\!\eta_1^f\frac{\sigma_{\max}^2\!(\!S\mathcal{L}_{21}\!)}{\iota}\!\Big)\!\Big]\!(\!\tilde{C}_l(t)\tilde{\Sigma}_l(t)\!)^2\\
					&- \!2\eta_2^l R \tilde{\psi}_l\!+\!\eta_1^l \frac{(\epsilon_1 p_1)^{-q_1/p_1}}{q_1}  R^{q_1}\!+ \!2\eta_2^l \frac{(\epsilon_2 p_2)^{-q_2/p_2}}{q_2}  R^{q_2}. \notag
				\end{split}
			\end{eqnarray}
			Let $\psi_i \!\ge\! \frac{\eta_1^l(\epsilon_1 p_1)^{-q_1/p_1}}{2\eta_2^l q_1} \{r_i\}^{q_1-1}\!+  \frac{(\epsilon_2 p_2)^{-q_2/p_2}}{q_2}  \{r_i\}^{q_2-1}+\frac{\check{\psi_i}}{2\eta_2^l r_i} $ for $i=1,\dots, M$, where $\check{\psi_i}$, $i=1,\dots, M$ will be designed later. Therefore, we obtain
			\begin{eqnarray}\label{Sectionv2-5.2-linear}
				\begin{split}
					\Phi_t^l
					&\le -\eta_3^l(\tilde{C}_l(t)\tilde{\Sigma}_l(t))^2-\check{\psi}_l.
				\end{split}
			\end{eqnarray}
			with $\eta_3^l=\eta_1^l\lambda_2(\tilde{\mathcal{L}}_{11})-\eta_1^f\frac{\sigma_{\max}^2(S\mathcal{L}_{21})}{\iota}-\eta_1^l\epsilon_1-2\eta_2^l\epsilon_2$ and $\check{\psi}_l=\mathrm{diag}\{\check{\psi}_1,...,\check{\psi}_M\}$.
			Let $\psi_i \!\ge\! \frac{\eta_1(\epsilon_1 p_1)^{-q_1/p_1}}{2\eta_2 q_1} \{s_i\}^{q_1-1}\!+  \frac{(\epsilon_2 p_2)^{-q_2/p_2}}{q_2}  \{s_i\}^{q_2-1}+\frac{\check{\psi_i}}{2\eta_2 s_i} $ for $i=M+1,\dots, N$, where $\check{\psi_i}$, $i=M+1,\dots, N$ will be designed later. Similarly, we have
			\begin{eqnarray}\label{Sectionv2-5.2-linear-1}
				\begin{split}
					\Phi_t^f
					&\le -\eta_3^f(\tilde{C}_f(t)\tilde{\Sigma}_f(t))^2-\check{\psi}_f.
				\end{split}
			\end{eqnarray}
			with $\eta_3^f=\eta_1^f(\lambda_1(\tilde{\mathcal{L}}_{22})-\iota)-\eta_1^f\epsilon_1-2\eta_2^f\epsilon_2$ and $\check{\psi}_f=\mathrm{diag}\{\check{\psi}_{M+1},...,\check{\psi}_N\}$. Choose $\eta_1^f<\frac{\eta_1^l\lambda_1(\tilde{\mathcal{L}}_{22})\lambda_2(\tilde{\mathcal{L}}_{11})}{\sigma_{\max}^2(S\mathcal{L}_{21})}$, $\iota <\lambda_1(\tilde{\mathcal{L}}_{22})$, and sufficiently small $\epsilon_1,\epsilon_2$ to ensure $\eta_3^l>0$ and $\eta_3^f>0$.
			Substituting the above inequalities \eqref{Sectionv2-5.2-linear} and \eqref{Sectionv2-5.2-linear-1} into \eqref{Sectionv2-5.1-linear} and using the elementary inequality: $2a^TOb\le \iota a^TOa+\frac{1}{\iota}b^TOb$, for any positive definite matrix $O\in \mathbb{R}^{n\times n} $, $a,b\in \mathbb{R}^{n}$ and $\iota>0 $, ones get
			\begin{eqnarray}
				\begin{split}
					\mathfrak{L}& V(\xi(t),t)\\
					\le &\eta_1^l \xi_l^T\!(t)[R(\tilde{C}_l(t)\tilde{\Sigma}_l(t))\otimes (A ^TP\!+\!PA \!+\!C^T\!PC )] \xi_l(t)\\
					&-\xi_l^T(t)\left[2\sqrt{\eta_3^l\check{\psi}_l}(\tilde{C}_l(t)\tilde{\Sigma}_l(t))\otimes \Gamma\right] \xi_l(t)\\
					&+\eta_1^f \xi_f^T\!(t)[S(\tilde{C}_f(t)\tilde{\Sigma}_f(t))\otimes (A ^TP\!+\!PA \!+\!C^T\!PC )] \xi_f(t)\\
					&-\xi_f^T(t)\left[2\sqrt{\eta_3^f\check{\psi}_f}(\tilde{C}_f(t)\tilde{\Sigma}_f(t))\otimes \Gamma\right] \xi_f(t).\notag
				\end{split}
			\end{eqnarray}
			Let $\check{\psi_i}\ge \frac{(\eta_1^l\max\{r_i\})^2}{4\eta_3^l}$ for $i=1,\dots, M$ and $\check{\psi_i}\ge \frac{(\eta_1^f\max\{s_{i-M}\})^2}{4\eta_3^f}$ for $i=M+1,\dots, N$. Then, we have
			\begin{eqnarray}\label{Sectionv2-6-linear}
				\begin{split}
					\mathfrak{L} V(\xi(t),t)\le& -\eta_1^l\xi_l^T(t)[R(\tilde{C}_l(t)\tilde{\Sigma}_l(t))\otimes I_n] \xi_l(t)\\ &-\eta_1^f\xi_f^T(t)[S(\tilde{C}_f(t)\tilde{\Sigma}_f(t))\otimes I_n] \xi_f(t)\\
					\le &-\eta_1^l\varepsilon \xi_l^T(t)[R(\tilde{C}_l(t)\tilde{\Sigma}_l(t))\otimes P] \xi_l(t)\\ &-\eta_1^f\varepsilon\xi_f^T(t)[S(\tilde{C}_f(t)\tilde{\Sigma}_f(t))\otimes P] \xi_f(t)\\
					= &-\eta_1^l\varepsilon \sum_{i=1}^{M}r_ic_i(t) \Sigma_i(t)\sigma_i(t)\\
					&-\eta_1^f\varepsilon \sum_{i=M+1}^{N}s_{i-M}c_i(t) \Sigma_i(t)\sigma_i(t)\\
					\le &-\eta_1^l\varepsilon \sum_{i=1}^{M}r_ic_i(t)\!\! \int_0^{\sigma_i(t)} \!\!\!k_1\Big(k_2+\frac{s}{c_i(t)}\Big)^{\mu}ds\\
					&-\eta_1^f\varepsilon \!\!\!\sum_{i=M+1}^{N}\!\!\!\!\!s_{i-M}c_i(t)\!\!\! \int_0^{\sigma_i(t)} \!\!\!\!\!\!k_1\Big(k_2\!+\!\frac{s}{c_i(t)}\Big)^{\mu}ds\\
					\le &-\varepsilon V(\xi(t),t)+\varphi(t),
				\end{split}
			\end{eqnarray}
			where $A ^TP+PA -PB  B  ^TP+C^TPC+I_n=0$, $\varepsilon=\frac{1}{\lambda_{\max}(P)}$, and $\varphi(t) =\eta_2^l \varepsilon \sum_{i=1}^{M}r_i(c_i(t)-\psi_i)^2+\eta_2^f \varepsilon \sum_{i=M+1}^{N}s_{i-M}(c_i(t)-\psi_i)^2$.
			Next, we will prove the existence and uniqueness, as well as m.s. and a.s. stability  of the solution to \eqref{Section4-3-linear} in the following two steps.
			
			Step 1:
			We need to prove the existence and uniqueness of the solution to the closed-loop stochastic system \eqref{Section4-3-linear}, as the system is inherently highly nonlinear and no longer satisfy the linear growth condition.
			Since $(I_N \otimes A  + (\tilde{C}(t)\tilde{\Sigma}(t)) \mathcal{L} \otimes B  \mathcal{K} )\vartheta(t) \le (I_N \otimes A  )\vartheta(t)+k_1\big[\big(\int_0^t \vartheta^T(s)(\mathcal{L}^T\mathcal{L} \otimes PB  B  ^TP)\vartheta(s)ds+C_1(0)\big)\big(k_2+\vartheta^T(t)(\mathcal{L}^T\mathcal{L}\otimes P) \vartheta(t) \big)^{\mu}\big]\mathcal{L} \otimes B  \mathcal{K} ]\vartheta(t)$ where $C_1(0)\ge 0$ holds, the drift and diffusion terms of closed-loop stochastic system \eqref{Section4-3-linear} satisfy the local Lipschitz condition.
			According to Lemma \ref{lemma-1}, the existence and uniqueness of the solution to \eqref{Section4-3-linear} can be proved.
			
			Step 2: By the functional $\mathrm{It\hat{o}} $ formula for $V(\xi(t),t)$ and the estimation \eqref{Sectionv2-6-linear} of $\mathfrak{L}V(\xi(t),t)\le -\eta_1^l\varepsilon \xi_l^T(t)[R(\tilde{C}_l(t)\tilde{\Sigma}_l(t))\otimes P] \xi_l(t)-\eta_1^f\varepsilon\xi_f^T(t)[S(\tilde{C}_f(t)\tilde{\Sigma}_f(t))\otimes P] \xi_f(t) \le-\eta_1\varepsilon\xi^T(t)[\tilde{S}(\tilde{C}(t)\tilde{\Sigma}(t))\otimes P] \xi(t)$ with $\eta_1=\eta_1^l \wedge \eta_1^f$ and $\tilde{S}=\mathrm{diag}\{R,S \}$, it can be deduced that
			\begin{eqnarray}\label{Sectionv2-S2-1}
				\begin{split}
					&V(\xi(t),t)-V(\xi(0),0)\\
					&=\int_0^t \mathfrak{L}V(\xi(s),s)ds+M_{v1}(t)\\
					&\le -\eta_1\varepsilon \!\int_0^t \!\xi^T(s)[\tilde{S}(\tilde{C}(s)\tilde{\Sigma}(s))\!\otimes\! P] \xi(s)ds+M_{v1}(t),
				\end{split}
			\end{eqnarray}
			where $M_{v1}(t)$ is a local martingale with $M_{v1}(0)=0$.
			Using the non-negative semi-martingale convergence theorem, ones have
			\begin{eqnarray}
				\begin{split}
					\lim_{t \to \infty}\sup V(\xi(t),t)<\infty \ a.s.\notag
				\end{split}
			\end{eqnarray}
			In addition, taking the expectations on both sides of \eqref{Sectionv2-S2-1}, we can obtain
			\begin{eqnarray}
				\begin{split}
					\ \int_0^t \!\mathrm{E}\eta_1 \varepsilon  \xi^T\!(s)[\tilde{S}(\tilde{C}(s)\tilde{\Sigma}(s))\!\otimes\! P] \xi(s)ds \le \! \mathrm{E}V(\xi(0),0) \!<\!\infty,\notag
				\end{split}
			\end{eqnarray}
			that is, letting $t \to \infty$ yields
			\begin{eqnarray}
				\begin{split}
					\int_0^\infty \mathrm{E}\eta_1 \varepsilon \xi^T(s)[\tilde{S}(\tilde{C}(s)\tilde{\Sigma}(s))\otimes P] \xi(s)ds <\infty.\notag
				\end{split}
			\end{eqnarray}
			Combining the above inequality with Barbalat's lemma \cite{krstic1995nonlinear}, we have $\lim_{t \to \infty} \mathrm{E} \xi^T(t)[\eta_1\varepsilon \tilde{S}(\tilde{C}(t)\tilde{\Sigma}(t))\otimes P] \xi(t)=0$. Note that $\tilde{C}(t)\tilde{\Sigma}(t) \ge \tilde{C}(0)>I_N$ and $\eta_1, \varepsilon>0$. Therefore, we have $\lim_{t \to \infty} \mathrm{E}\|\xi(t)\|^2=0$.
				Note that $\xi(t)=(\mathcal{L} \otimes I_n)\vartheta(t)$ and $\lim_{|\xi| \to \infty}\inf_{t\ge 0} V(\xi,t)=\infty$. Using the stochastic LaSalle lemma \cite{wang2020advances}, we can get $\lim_{t \to \infty} \xi(t)=0$ a.s. According to the definition of $\xi(t)$, the m.s. and a.s. consensus can be solved.
		\end{proof}
		
		\begin{remark}\label{remark-11}
			In fact, the function $\Sigma_i$ admits other design forms. Especially, we can choose $\Sigma_i(t)= k_1(k_2+\sigma_i(t))^{\mu}$ with $\mu>1, k_1, k_2\ge 1$, $c_i(t)\ge 1$, $\Sigma_i(t)\ge 1$. By selecting the Lyapunov function $\tilde{V}(\xi(t),t)$ $=V(\xi(t),t)$, it can be deduced
			that $\mathfrak{L} \tilde{V}(\xi(t),t)= \sum_{i=1}^{M}r_i[\eta_1^l\Sigma_i(t)\sigma_i(t)+2\eta_2^lc_i(t)-2\eta_2^l\psi_i]\dot{c}_i(t)+\eta_1^l\sum_{i=1}^{M}r_i c_i(t)\Sigma_i(t)\mathfrak{L}\sigma_i(t)+\sum_{i=M+1}^{N}s_{i-M}[\eta_1^f\Sigma_i(t)\sigma_i(t)+2\eta_2^fc_i(t)-2\eta_2^f\psi_i]\dot{c}_i(t)+\eta_1^f\sum_{i=M+1}^{N}s_{i-M} c_i(t)\Sigma_i(t)\mathfrak{L}\sigma_i(t)$.
			By the similar steps of the above proof of Theorem \ref{theorem-1}, the m.s. and a.s. consensus of \eqref{2-1} can be solved by the protocol \eqref{Section4-1} with $\Sigma_i(t)= k_1(k_2+\sigma_i(t))^{\mu}$, for $\mu>1, k_1, k_2\ge 1$ and $c_i(0)\ge 1$, $\Sigma_i(0)\ge 1$.
			In particular, the feedback control structure proposed in \cite{li2015designing} is recovered as a special case of this protocol by setting $\mu = 3$ and $k_1 = k_2 = 1$.
			
		\end{remark}
		
		Furthermore, for the case of $\mu=1$ and $\gamma=0$, the control protocol \eqref{Section4-1} can be written as
		\begin{eqnarray}\label{4-1}
			\begin{split}
				&u_i(t)=k_1(k_2c_i(t)+\sigma_i(t))\mathcal{K} \xi_i(t),\\
				&\dot{c}_i(t)=\xi_i^T(t)\Gamma \xi_i(t),
			\end{split}
		\end{eqnarray}
		where $k_1, k_2>0$, $\xi_i(t)=\sum_{j\in N_i} a_{ij}(x_i(t)-x _j(t))$ and $\sigma_i(t)=\xi_i^T(t)P\xi_i^T(t)$. Then we have the following theorem.
		
		\begin{theorem}\label{the4-1-linear}
			Suppose the directed graph $\tilde{\mathcal{G}}$ contains a spanning tree. The m.s. and a.s. consensus of \eqref{2-1} can be solved by the protocol \eqref{4-1} with $\mathcal{\mathcal{K} } =-B  ^TP$, and $\Gamma=PB  B  ^TP$, where $P$ is the solution to the SARE \eqref{2-2}.
		\end{theorem}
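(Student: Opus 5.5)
The plan follows the proof of Theorem \ref{theorem-1}, with the multiplicative gain $c_i\Sigma_i$ replaced by the additive gain $g_i(t):=k_1(k_2c_i(t)+\sigma_i(t))$. I would use the Lyapunov functional
\[
V=\sum_{i=1}^{M}r_i\Big[\eta_1^l\big(k_1k_2c_i\sigma_i+\tfrac{k_1}{2}\sigma_i^2\big)+\eta_2^l(c_i-\psi_i)^2\Big]+\sum_{i=M+1}^{N}s_{i-M}\Big[\eta_1^f\big(k_1k_2c_i\sigma_i+\tfrac{k_1}{2}\sigma_i^2\big)+\eta_2^f(c_i-\psi_i)^2\Big].
\]
Here $R$ and $S$ come from Lemmas \ref{lem-strongly-connected} and \ref{lem-nonsingular-M}. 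The term $k_1k_2c_i\sigma_i+\frac{k_1}{2}\sigma_i^2$ equals $c_i\int_0^{\sigma_i}k_1(k_2+s/c_i)\,ds$, i.e. the $\mu=1$ version of the earlier integrand. Note that $\partial_{\sigma_i}$ of this term is exactly $g_i$.

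\textbf{Generator computation.} I would apply the functional It\^o formula using \eqref{xi-l} and \eqref{xi-f}. Since $\sigma_i$ is quadratic in $\xi_i$, the Hessian of $\frac{k_1}{2}\sigma_i^2$ produces an extra It\^o correction $k_1\,(\xi_i^TPC\xi_i)^2\cdot 2\ge 0$ with the wrong sign. I would bound it by $2k_1\|C\|^2\lambda_{\max}(P)\,\lambda_{\min}(P)^{-1}\sigma_i^2$, which is of order $\sigma_i\cdot g_i$. The $c_i$-derivative terms contribute $[\eta_1 k_1k_2\sigma_i+2\eta_2(c_i-\psi_i)]\dot c_i$, with $\dot c_i=\xi_i^T\Gamma\xi_i$. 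The main term is $\eta_1 g_i\mathfrak{L}\sigma_i$, which splits as in \eqref{VVV-1} and \eqref{VVV-1-1}:
\[
\xi^T[R G\otimes(A^TP+PA+C^TPC)]\xi-\xi^T[G^2\tilde{\mathcal L}\otimes\Gamma]\xi ,
\]
where $G=\mathrm{diag}(g_i)$. The cross term $\mathcal{L}_{21}$ is handled by the same $\iota$-inequality, using $g_l$ in place of $\tilde C_l\tilde\Sigma_l$.

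\textbf{Completing squares.} The gathered $\Gamma$-weighted matrix is
\[
\Phi=\eta_1k_1k_2R\tilde\sigma+2\eta_2R(\tilde C-\tilde\psi)-\eta_3 G^2 .
\]
Here $\eta_3$ is obtained exactly as before, with $\eta_1^f$ small and $\iota<\lambda_1(\tilde{\mathcal L}_{22})$. The key simplification for $\mu=1$ is that both $\sigma_i$ and $c_i$ are bounded by $g_i/k_1$ and $g_i/(k_1k_2)$ respectively. Therefore $\eta_1k_1k_2r_i\sigma_i+2\eta_2r_ic_i\le \kappa_i g_i\le \epsilon g_i^2+\kappa_i^2/(4\epsilon)$, and no Young exponents are needed. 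This is simpler than the $\mu>1$ case.

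Choosing $\psi_i$ large then gives $\Phi\le-\eta_3'G^2-\check\psi$, and hence, as in \eqref{Sectionv2-6-linear},
\[
-\Phi\otimes\Gamma\ \ge\ 2\sqrt{\eta_3'\check\psi}\,G\otimes\Gamma .
\]
Combined with the SARE \eqref{2-2}, this yields $\mathfrak{L}V\le-\eta_1\varepsilon\,\xi^T(\tilde S G\otimes P)\xi+(\text{It\^o correction})$.

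\textbf{Main obstacle.} The principal difficulty is absorbing the It\^o correction $\sim\sigma_i^2$. I would first try to write it as $\le c\,g_i\sigma_i/k_1$ and absorb it into the negative term $-\eta_1 g_i\xi_i^T\xi_i$. This requires a quantitative margin, so I would replace $I_n$ in the SARE bound by using only part of $-\xi^T(RG\otimes I_n)\xi$. If that is insufficient, I would use the $-\eta_3'G^2$ reserve together with $\sigma_i\le g_i/k_1$ and keep $\eta_3'$ large via $\psi_i$.

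With $\mathfrak{L}V\le -c\,\xi^T(\tilde S G\otimes P)\xi\le0$ in hand, I would conclude as in Steps 1–2 of Theorem \ref{theorem-1}. Existence and uniqueness follow from local Lipschitz continuity and Lemma \ref{lemma-1}, with the stopping-time argument. The semimartingale convergence theorem gives $\sup V<\infty$ a.s. Integrability of $\mathbb{E}\,\xi^T(\tilde SG\otimes P)\xi$ together with $G\ge k_1k_2c_i(0)>0$ and Barbalat's lemma gives $\mathbb{E}\|\xi\|^2\to0$. Finally, stochastic LaSalle gives $\xi\to0$ a.s., and since $\mathcal L$ has a simple zero eigenvalue, this implies consensus.
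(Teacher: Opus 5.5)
Your route is the paper's. Your $V$ is the paper's $V_1$ with the $\eta_1$-part halved. The generator split, the $\iota$-bound on the $\mathcal L_{21}$ coupling, the large-$\psi_i$ completion of squares, and the concluding steps (Lemma \ref{lemma-1}, semimartingale convergence, Barbalat, LaSalle) all match.

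Where you depart from the paper, you are right. The term $\frac12\partial_{\sigma_i}^2V\,d\langle\sigma_i\rangle/dt$ contributes $+2\eta_1k_1w_i(\xi_i^TPC\xi_i)^2$, with $w_i=r_i$ or $s_{i-M}$ and $\eta_1\in\{\eta_1^l,\eta_1^f\}$. The paper's expression for $\mathfrak LV_1$ keeps only $(\eta_{k_2}^lc_i+\eta_{k_1}^l\sigma_i)\mathfrak L\sigma_i$ and simply drops this term, as does the proof of Theorem \ref{theorem-1}. That is how it reaches $\mathfrak LV_1\le-\varepsilon V_1+\varphi$.

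So the paper offers nothing for your ``main obstacle'', and neither of your remedies removes it. Take any state $\xi=z\otimes v$ with $z\in\operatorname{range}(\mathcal L)$ and $B^TPv=0$. Then $\dot c_i=0$, and every $\Gamma$-weighted term vanishes: the $-\eta_3'G^2$ reserve, the $\psi_i$-terms, and the $\mathcal L_{21}$ cross term. The SARE then gives exactly
\[
\mathfrak LV=k_1\sum_i\eta_1w_i\big[2(\xi_i^TPC\xi_i)^2-(k_2c_i+\sigma_i)\|\xi_i\|^2\big].
\]
The quartic part contains none of the tunable constants. There is also no spare margin in $-\xi^T(RG\otimes I_n)\xi$ to reallocate, since on $\ker(B^TP)$ the SARE yields exactly $-I_n$. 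Hence $\mathfrak LV\le0$ forces $2(v^TPCv)^2\le(v^TPv)\|v\|^2$ for all $v\in\ker(B^TP)$, and the SARE does not imply this.

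For instance, $A=\mathrm{diag}(-1,0)$, $B=[0\ \ 1]^T$, $C=\mathrm{diag}(1,0)$ satisfy the SARE with $P=I_2$, and $v=e_1$ gives $2>1$. Worse, the first components are uncontrolled and share the Brownian motion $w$. So $\xi_{i1}(t)=\xi_{i1}(0)e^{-3t/2+w(t)}$ and $\mathbb E\xi_{i1}^4(t)=\xi_{i1}^4(0)e^{2t}\to\infty$. Since $V\ge\frac{k_1}{2}\eta_1w_i\sigma_i^2\ge\frac{k_1}{2}\eta_1w_i\xi_{i1}^4$, we get $\mathbb EV(t)\to\infty$. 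That rules out $\mathfrak LV\le0$, which would give $\mathbb EV(t)\le V(0)$, for every choice of $k_1,k_2$, the $\eta$'s, $\psi_i$, $\iota$ and $\epsilon$.

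Consensus still holds in this example, since $\mathbb E\xi_{i1}^2(t)=\xi_{i1}^2(0)e^{-t}$, so what fails is the method. A functional containing $\sigma_i^2\sim\|\xi_i\|^4$ needs fourth-moment control, while the SARE only certifies second-moment stabilizability. To close the step you need either an extra hypothesis or a different Lyapunov construction. Your own bound would require $2\|C\|^2\lambda_{\max}(P)^2\le\lambda_{\min}(P)$, or at least the kernel condition above. Following the paper's proof will not fill this gap.
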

		
		\begin{proof}
			Substituting the control protocol \eqref{4-1} into \eqref{2-1} yields the following closed-loop system like \eqref{Section4-3-linear}
			\begin{eqnarray}
				\begin{split}\label{4-3-linear}
					d\vartheta(t)=&(I_N \otimes A  +  k_1(k_2\tilde{C}(t)+\tilde{\sigma}(t)) \mathcal{L} \otimes B  \mathcal{K} )\vartheta(t)dt\\&+(I_N \otimes C)\vartheta(t)dw(t).
				\end{split}
			\end{eqnarray}
			Then, according to the graph decomposition in Lemma \ref{lem-decomposition}, we have
			\begin{eqnarray}\label{xi-l-1}
				\begin{split}
					d\xi_l(t)\!=&(I_M \!\otimes\! A  \!+\!  (k_1(k_2\tilde{C}_l(t)\!+\!\tilde{\sigma}_l(t)) \mathcal{L}_{11}) \!\otimes\! B  \mathcal{K} )\xi_l(t)dt\!\!\\&\!+\!(I_M \!\otimes\! C)\xi_l(t)dw(t),
				\end{split}
			\end{eqnarray}
			and
			\begin{eqnarray}\label{xi-f-1}
				\begin{split}
					d\xi_{\!f\!}(t)\!=&(I_{N\!-\!M} \!\otimes\! A  \!+\!  (k_1(k_2\tilde{C}_{\!f\!}(t)\!+\!\tilde{\sigma}_{\!f\!}(t))\mathcal{L}_{22}) \!\otimes\! B  \mathcal{K} )\xi_{\!f\!}(t)dt\!\!\!\!\!\!\!\\&+  (k_1(k_2\tilde{C}_l(t)\!+\!\tilde{\sigma}_l(t))\mathcal{L}_{21} \otimes B  \mathcal{K} )\xi_l(t)dt\\
					&+(I_{N\!-\!M} \!\otimes \!C)\xi_f(t)dw(t).
				\end{split}
			\end{eqnarray}
			Note that the Lyapunov functional $V(\xi(t), t)$ in Theorem \ref{theorem-1} satisfies $V(\xi(t), t)\le \sum_{i=1}^{M}r_i[\eta_1^lk_1(2k_2c_i(t)+\sigma_i(t))\sigma_i(t)+\eta_2^l(c_i(t)-\psi_i)^2]+\sum_{i=M+1}^{N}s_{i-M} [\eta_1^fk_1(2k_2c_i(t)+\sigma_i(t))\sigma_i(t)+\eta_2^f(c_i(t)-\psi_i)^2]$ with $\mu=1$. Then, we choose Lyapunov functional
			\begin{eqnarray}
				\begin{split}
					V_1&(\xi(t),t)\\
					=&\sum_{i=1}^{M}\!r_i\!\Big[\eta_1^lk_1(2k_2c_i(t)\!+\!\sigma_i(t))\sigma_i(t)\!+\!\eta_2^l(c_i(t)\!-\!\psi_i)^2\Big]\\
					&\!\!+\!\!\!\!\sum_{i=M+1}^{N}\!\!\!\!\!s_{i-M}\!\Big[\eta_1^fk_1(2k_2c_i(t)\!+\!\sigma_i(t))\sigma_i(t)\!+\!\eta_2^f(c_i(t)\!-\!\psi_i)^2\Big],\notag
				\end{split}
			\end{eqnarray}
			where $\eta_1^l, \eta_2^l, \eta_1^f, \eta_2^f>0$, $\psi_i$ are positive constants to be determined. According to the functional $\mathrm{It\hat{o} } $ formula, it can be deduced that $d V_1(\xi(t),t)$ has the following form
			\begin{eqnarray}
				\begin{split}
					dV_{\!1}&(\xi(t),t)\!=\!\mathfrak{L}V_{\!1}(\xi(t),t) dt\!+\! \frac{\partial V_1(\xi(t),t)}{\partial \xi}\!(I_N\!\otimes\! C   )\xi(t)dw(t),\notag
				\end{split}
			\end{eqnarray}
			with 
			\begin{eqnarray}
					\begin{split}
							\mathfrak{L}& V_1(\xi(t),t)\\
				= & \sum_{i=1}^{M}\!r_i\big[(\eta_{k_2}^l\sigma_i(t)\!+\!2\eta_2^lc_i(t)\!-\!2\eta_2^l\psi_i)\dot{c}_i(t)\big]\\
				&+\sum_{i=1}^{M}\!r_i\big[\big(\eta_{k_2}^lc_i(t)\!+\!\eta_{k_1}^l\sigma_i(t)\big)\mathfrak{L}\sigma_i(t)\big]\\
				&+\sum_{i=M+1}^{N}\!s_{i-M}\big[(\eta_{k_2}^f\sigma_i(t)\!+\!2\eta_2^fc_i(t)\!-\!2\eta_2^f\psi_i)\dot{c}_i(t)\big]\\
				&+\sum_{i=M+1}^{N}\!s_{i-M}\big[\big(\eta_{k_2}^fc_i(t)\!+\!\eta_{k_1}^f\sigma_i(t)\big)\mathfrak{L}\sigma_i(t)\big],\notag
						\end{split}
				\end{eqnarray}
		with $\eta_{k_1}^l=2\eta_1^lk_1$, $\eta_{k_2}^l=2\eta_1^lk_1k_2$, $\eta_{k_1}^f=2\eta_1^fk_1$, and $\eta_{k_2}^f=2\eta_1^fk_1k_2$. Denote $\tilde{\psi}=\mathrm{diag}\{\psi_1,...,\psi_N\}$. According to Lemma \ref{lem-decomposition}, $\tilde{\psi}$ can be partitioned into $\tilde{\psi}_l=\mathrm{diag}\{\psi_1,...,\psi_M\}$ and $\tilde{\psi}_f=\mathrm{diag}\{\psi_{M+1},...,\psi_N\}$. Let $\mathcal{K} =-B  ^TP$ and $\Gamma=PB  B  ^TP$.
		Note that $\sum_{i=1}^{M}\sigma_i(t)=\xi_l^T(t)(I_{M}\otimes P)\xi_l(t)$ and $\xi_l(t)=(\mathcal{L}_{11} \otimes I_n)\vartheta_l(t)$. By the definition of $\sigma_i(t)$, \eqref{xi-l-1}, \eqref{xi-f-1} and using Lemma \ref{lem-decomposition} and \ref{lem-strongly-connected}, we have
		\begin{eqnarray}
			\begin{split}
				&\sum_{i=1}^{M}\!r_i\big[\big(\eta_{k_2}^lc_i(t)\!+\!\eta_{k_1}^l\sigma_i(t)\big)\mathfrak{L}\sigma_i(t)\big]\\
				&
				= \xi_l^T\!(t)[R(\eta_{k_2}^l\tilde{C}_l(t)\!+\!\eta_{k_1}^l\tilde{\sigma}_l(t))\!\otimes\! (A ^T\!P\!+\!PA \!+\!C^T\!PC   )] \xi_l(t)\!\!\!\!\\ &\ \ \ -\!2\eta_1^l\xi_l^T(t)\big[(k_1k_2\tilde{C}_l(t)+k_1\tilde{\sigma}_l(t))^2\tilde{\mathcal{L}}_{11}\otimes \Gamma\big] \xi_l(t),\notag
			\end{split}
		\end{eqnarray}
		and
		\begin{eqnarray}
			\begin{split}
				&\sum_{i=1}^{M}r_i(\eta_{k_2}^l\sigma_i(t)+2\eta_2^lc_i(t)-2\eta_2^l\psi_i)\dot{c}_i(t)\\
				&=\xi_l^T(t)[R(\eta_{k_2}^l\tilde{\sigma}_l(t)\!+\!2\eta_2^l\tilde{C}_l(t)\!-\!2\eta_2^l\tilde{\psi}_l)\otimes \Gamma] \xi_l(t).\notag
			\end{split}
		\end{eqnarray}
		Note that $\sum_{i=M+1}^{N}\sigma_i(t)=\xi_f^T(t)(I_{N-M}\otimes P)\xi_f(t)$. Then, using Lemma \ref{lem-decomposition}, \ref{lem-nonsingular-M}, and the elementary inequality: $2a^TOb\le \iota a^TOa+\frac{1}{\iota}b^TOb$, for any positive definite matrix $O\in \mathbb{R}^{n\times n} $, $a,b\in \mathbb{R}^{n}$ and $\iota>0 $, we have
		\begin{eqnarray}\label{V1-f-1}
			\begin{split}
				&\sum_{i=M+1}^{N}\!s_{i-M}\big[\big(\eta_{k_2}^fc_i(t)\!+\!\eta_{k_1}^f\sigma_i(t)\big)\mathfrak{L}\sigma_i(t)\big]\\
				&
				=\! \xi_f^T\!(t)[S(\eta_{k_2}^f\!\tilde{C}_f(t)\!+\!\eta_{k_1}^f\!\tilde{\sigma}_f(t))\!\otimes\! (A ^T\!P\!+\!PA \!+\!C^T\!PC   )] \xi_f(t)\!\!\!\!\\ &\ \ \ -2\eta_1^f\xi_f^T(t)\big[(k_1k_2\tilde{C}_f(t)+k_1\tilde{\sigma}_f(t))^2\tilde{\mathcal{L}}_{22}\otimes \Gamma\big] \xi_f(t)\\ &\ \ \ -\!2\eta_1^f\xi_f^T\!(t)\big[(k_1k_2\tilde{C}_l(t)\!+\!k_1\tilde{\sigma}_l(t))\\
				&\quad \quad \quad \quad \quad \quad \cdot (k_1k_2\tilde{C}_f(t)\!+\!k_1\tilde{\sigma}_f(t))S\mathcal{L}_{21}\!\otimes \! \Gamma\big] \xi_l(t)\\
				&
				\le \! \xi_f^T\!(t)[S(\eta_{k_2}^f\!\tilde{C}_f(t)\!+\!\eta_{k_1}^f\!\tilde{\sigma}_f(t))\!\otimes\! (A ^T\!P\!+\!PA \!+\!C^T\!PC   )] \xi_f(t)\!\!\!\!\\ &\ \ \ -2\eta_1^f\xi_f^T(t)\big[(k_1k_2\tilde{C}_f(t)+k_1\tilde{\sigma}_f(t))^2\tilde{\mathcal{L}}_{22}\otimes \Gamma\big] \xi_f(t)\\ &\ \ \ +2\eta_1^f\iota\xi_f^T\!(t)[(k_1k_2\tilde{C}_f(t)\!+\!k_1\tilde{\sigma}_f(t))^2\!\otimes \! \Gamma] \xi_f(t)\\ &\ \ \ +2\eta_1^f\frac{\sigma_{\max}^2(S\mathcal{L}_{21})}{\iota}\xi_l^T\!(t)[(k_1k_2\tilde{C}_l(t)\!+\!k_1\tilde{\sigma}_l(t))^2\!\otimes \! \Gamma] \xi_l(t),\notag
			\end{split}
		\end{eqnarray}
		and
		\begin{eqnarray}\label{V1-f-2}
			\begin{split}
				&\sum_{i=M+1}^{N}\!s_{i-M}\big[(\eta_{k_2}^f\sigma_i(t)\!+\!2\eta_2^fc_i(t)\!-\!2\eta_2^f\psi_i)\dot{c}_i(t)\big]\\
				&=\xi_f^T(t)[S(\eta_{k_2}^f\tilde{\sigma}_f(t)\!+\!2\eta_2^f\tilde{C}_f(t)\!-\!2\eta_2^f\tilde{\psi}_f)\otimes \Gamma] \xi_l(t).\notag
			\end{split}
		\end{eqnarray}
		Denote $\Theta^l_t=\eta_{k_2}^l\tilde{C}_l(t)+\eta_{k_1}^l\tilde{\sigma}_l(t)$ and $\Theta^f_t=\eta_{k_2}^f\tilde{C}_f(t)+\eta_{k_1}^f\!\tilde{\sigma}_f(t)$. Substituting the above inequalities into $\mathfrak{L}V_1(\xi(t),t)$ yields
		\begin{eqnarray}
			\begin{split}
				\mathfrak{L}& V_1(\xi(t),t)\\
				\le&\xi_l^T\!(t)[R\Theta^l_t\!\otimes\! (A ^T\!P\!+\!PA \!+\!C^T\!PC   )] \xi_l(t)\\ &+\xi_f^T\!(t)[S\Theta^f_t\!\otimes\! (A ^T\!P\!+\!PA \!+\!C^T\!PC   )] \xi_f(t)\\&-2\tilde{\eta}_l\xi_l^T\!(t)\big[(k_1k_2\tilde{C}_l(t)\!+\!k_1\tilde{\sigma}_l(t))^2\!\otimes \! \Gamma\big] \xi_l(t)\\ &-2\tilde{\eta}_f\xi_f^T(t)\big[(k_1k_2\tilde{C}_f(t)+k_1\tilde{\sigma}_f(t))^2\otimes \Gamma\big] \xi_f(t)\\ & +\xi_l^T(t)[R(\eta_{k_2}^l\tilde{\sigma}_l(t)\!+\!2\eta_2^l\tilde{C}_l(t)\!-\!2\eta_2^l\tilde{\psi}_l)\!\otimes\! \Gamma] \xi_l(t)\\
				&+\xi_f^T(t)[S(\eta_{k_2}^f\tilde{\sigma}_f(t)\!+\!2\eta_2^f\tilde{C}_f(t)\!-\!2\eta_2^f\tilde{\psi}_f)\otimes \Gamma] \xi_l(t),\notag
			\end{split}
		\end{eqnarray}
		where $\tilde{\eta}_l=\eta_1^l \lambda_2(\tilde{\mathcal{L}}_{11})-\eta_1^f\frac{\sigma_{\max}^2(S\mathcal{L}_{21})}{\iota}$ and $\tilde{\eta}_f=\eta_1^f( \lambda_2(\tilde{\mathcal{L}}_{22})-\iota)$. Choose parameters $\eta_1^l,\eta_1^f,\iota $ to ensure $\tilde{\eta}_l>0$ and $\tilde{\eta}_f>0$. Let $\eta_2^l=\eta_1^l k_1k_2^2$ and $\eta_2^f=\eta_1^f k_1k_2^2$. Using the elementary inequality: $2a^TOb\le \iota a^TOa+\frac{1}{\iota}b^TOb$, for any positive definite matrix $O\in \mathbb{R}^{n\times n} $, $a,b\in \mathbb{R}^{n}$ and $\iota>0 $, ones get
		\begin{eqnarray}
			\begin{split}
				\mathfrak{L}& V_1(\xi(t),t)\\
				\le&\xi_l^T\!(t)[R\Theta^l_t\!\otimes\! (A ^T\!P\!+\!PA \!+\!C^T\!PC \!+\!k_2\Gamma  )] \xi_l(t)\\ &\!\!\!\!+\!\xi_f^T\!(t)[S\Theta^f_t\!\otimes\! (A ^T\!P\!+\!PA \!+\!C^T\!PC\!+\!k_2\Gamma)] \xi_f(t)\\&\!\!\!\!-\!\xi_l^T\!(t)\big[(2\tilde{\eta}_l(k_1k_2\tilde{C}_l(t)\!+\!k_1\tilde{\sigma}_l(t))^2\!+\!2\eta_2^l\tilde{\psi}_lR)\!\otimes \! \Gamma\big] \xi_l(t)\\ &\!\!\!\!-\!\xi_f^T(t)\big[(2\tilde{\eta}_f(k_1k_2\tilde{C}_f(t)\!+\!k_1\tilde{\sigma}_f(t))^2\!+\!2\eta_2^f\tilde{\psi}_fS)\otimes \Gamma\big] \xi_f(t)\\
				\le&\xi_l^T\!(t)[R\Theta^l_t\!\otimes\! (A ^T\!P\!+\!PA \!+\!C^T\!PC \!+\!k_2\Gamma  )] \xi_l(t)\\ &\!\!\!\!+\!\xi_f^T\!(t)[S\Theta^f_t\!\otimes\! (A ^T\!P\!+\!PA \!+\!C^T\!PC\!+\!k_2\Gamma)] \xi_f(t)\\&\!\!\!\!-\!2\xi_l^T\!(t)\Big[\sqrt{\tilde{\eta}_l \eta_2^l\tilde{\psi}_lR}\Big(k_1k_2\tilde{C}_l(t)\!+\!k_1\tilde{\sigma}_l(t)\Big)\!\otimes \! \Gamma\big] \xi_l(t)\\ &\!\!\!\!-\!2\xi_f^T(t)\big[\sqrt{\tilde{\eta}_f\eta_2^f\tilde{\psi}_fS}\Big(k_1k_2\tilde{C}_f(t)+k_1\tilde{\sigma}_f(t)\Big)\otimes \Gamma\Big] \xi_f(t).\notag
			\end{split}
		\end{eqnarray}
		Let $\psi_i\ge \frac{(\eta_1^l)^2\max\{r_i\}(1+k_2)^2}{\tilde{\eta}_l \eta_2^l}$ for $i=1,\dots, M$ and $\psi_i\ge \frac{(\eta_1^f)^2\max\{s_{i-M}\}(1+k_2)^2}{\tilde{\eta}_f \eta_2^f}$ for $i=M+1,\dots, N$. Then, we have
		\begin{eqnarray}\label{v2-6-linear}
			\begin{split}
				\mathfrak{L} V_1(\xi(t),t)\le& -\!\xi_l^T\!(t)[R\Theta^l_t\otimes I_n] \xi_l(t)\! -\!\xi_f^T\!(t)[S\Theta^f_t\!\otimes\! I_n] \xi_f(t)\\
				\le &-\!\varepsilon \xi_l^T\!(t)\Big[R\Big(\eta_{k_2}^l\tilde{C}_l(t)\!+\!\frac{\eta_{k_1}^l}{2}\tilde{\sigma}_l(t)\Big)\otimes P\Big] \xi_l(t)\\& -\!\varepsilon\xi_f^T\!(t)\Big[S\Big(\eta_{k_2}^f\tilde{C}_f(t)\!+\!\frac{\eta_{k_1}^f}{2}\tilde{\sigma}_f(t)\Big)\otimes P\Big] \xi_f(t)\\
				\le &-\varepsilon V_1(\xi(t),t)+\varphi(t),\notag
			\end{split}
		\end{eqnarray}
		where $A ^TP+PA -PB  B  ^TP+C^TPC +I_n=0$, $\varepsilon=\frac{1}{\lambda_{\max}(P)}$, and $\varphi(t) =\eta_2^l \varepsilon \sum_{i=1}^{M}r_i(c_i(t)-\psi_i)^2+\eta_2^f \varepsilon \sum_{i=M+1}^{N}s_{i-M}(c_i(t)-\psi_i)^2$.
		Then, according to Lemma \ref{lemma-1} and the proof of Theorem \ref{theorem-1}, we can prove the existence and uniqueness, as well as m.s. and a.s. stability  of the solution to \eqref{4-3-linear}. Then, by the definition of $\xi(t)$, the m.s. and a.s. consensus can be solved.
	\end{proof}
	
	\begin{remark}
		Theorems \ref{theorem-1} and \ref{the4-1-linear} established a unified and flexible framework for the design of fully distributed protocol in stochastic environments under directed topology. 
		In contrast to the existing adaptive protocols \cite{gu2020adaptive,wen2024fully}, which are designed for stochastic MASs under undirected graphs, the proposed protocols address the case of directed graphs and provide a fully distributed control framework that can be applied to solve both the m.s. and a.s. consensus under directed graphs.
		To this end, the adaptive strategy \eqref{Section4-1} incorporates two time-varying gains, $c_i(t)$ and $\Sigma_i(t)$, each serving a distinct role in facilitating stochastic consensus: $c_i(t)$ is employed to estimate global information, while $\Sigma_i(t)$ accounts for the asymmetrical interactions inherent in directed graphs. 
		
%
	\end{remark}


	\begin{remark}
		In fact, the structure of the auxiliary time-varying gain function $\Sigma_i(t)$, parameterized by $k_1$, $k_2$, $\mu$, and the time-varying terms $c_i(t)$, $\sigma_i(t)$, introduces generality into the protocol design. This generality allows our adaptive protocol to encompass several representative consensus designs for deterministic multi-agent systems under directed topologies as special cases, thereby demonstrating broader applicability. For instance, the adaptive gain design in \cite{zhou2024fully} can be obtained from Theorem \ref{the4-1-linear} by setting $k_1 = k_2 = 1$, which can be regarded as specific parameterizations within our proposed framework \eqref{Section4-1}.
%
	\end{remark}
	
	\subsection{Fully distributed consensus control for stochastic MASs with the undirected topology}\label{sec3}
	The parameter $\Sigma_i(t)$ in the protocol \eqref{Section4-1} is introduced to account for asymmetric interactions in the network;  however, since undirected graphs exhibit symmetric adjacency relations by definition, $\Sigma_i(t)$ becomes superfluous and can be safely omitted in such case. Assume that the undirected graph $\bar{\mathcal{G}}$ is connected.
	Therefore, the eigenvalues of $\mathcal{L}$ for the undirected graph $\bar{\mathcal{G}}$ are real and nonnegative, denoted by  $\lambda_1(\mathcal{L})=0$ and $0<\lambda_2(\mathcal{L})\leq\ldots\leq\lambda_N(\mathcal{L})$.
	
	Then, we have the following theorem for the case of undirected topology.
	\begin{theorem}\label{the-3.1}
		Suppose the undirected graph $\bar{\mathcal{G}}$ is connected. The m.s. and a.s. consensus of \eqref{2-1} can be solved by the protocol \eqref{Section4-1} with $\Sigma_i(t)=1$, $\gamma=0$, $\mathcal{K} =-B  ^TP$, and $\Gamma=PB  B  ^TP$, where $P$ is the solution to the SARE \eqref{2-2}.
	\end{theorem}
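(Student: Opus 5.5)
With $\Sigma_i(t)=1$ and $\gamma=0$ the protocol reduces to the classical adaptive law $u_i=-c_i B^TP\xi_i$, $\dot c_i=\xi_i^T\Gamma\xi_i$. The closed loop for $\xi(t)=(\mathcal{L}\otimes I_n)\vartheta(t)$ then reads
\begin{equation*}
d\xi=(I_N\otimes A-\mathcal{L}\tilde{C}(t)\otimes BB^TP)\xi\,dt+(I_N\otimes C)\xi\,dw.
\end{equation*}
I would use the undirected analogue of the functional from Theorem \ref{theorem-1}:
\begin{equation*}
V(\xi(t),t)=\sum_{i=1}^N 2c_i(t)\sigma_i(t)+\eta\sum_{i=1}^N(c_i(t)-\psi)^2 ,
\end{equation*}
equivalently $2\xi^T(\tilde{C}\otimes P)\xi+\eta\sum_i(c_i-\psi)^2$. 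Here $\eta>0$ and $\psi>0$ are constants to be chosen. This plays the role of $V$ with $R=S=I$, and symmetry of $\mathcal{L}$ replaces Lemmas \ref{lem-strongly-connected}--\ref{lem-nonsingular-M}.

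\emph{Generator computation.} Applying the functional It\^o formula, the $\dot c_i$ terms give $\xi^T[(2\tilde\sigma+2\eta(\tilde{C}-\psi I))\otimes\Gamma]\xi$. The state terms give
\begin{equation*}
2\xi^T[\tilde{C}\otimes(A^TP+PA+C^TPC)]\xi-2\xi^T[(\tilde{C}\mathcal{L}\tilde{C}+\tilde{C}\mathcal{L}\tilde{C})\otimes\Gamma]\xi .
\end{equation*}
The cross term is $-4\xi^T(\tilde C\mathcal{L}\tilde C\otimes\Gamma)\xi$. Since $\xi=(\mathcal{L}\otimes I)\vartheta$ lies in the range of $\mathcal{L}\otimes I$, I would bound it using $\mathcal{L}\ge\lambda_2(\mathcal{L})(I-\tfrac1N\mathbf{1}\mathbf{1}^T)$.

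The main obstacle is here. The vector $(\tilde C\otimes I)\xi$ is no longer orthogonal to $\mathbf{1}\otimes I$, so the $\lambda_2$ bound does not apply directly. My first attempt would mimic the directed-case trick and use the fact that $\mathcal{L}$ is a nonsingular-like operator on the relevant subspace. If that fails, I would instead switch the quadratic part of the Lyapunov functional to $\vartheta^T(\mathcal{L}\otimes P)\vartheta$ (the standard Li et al.\ choice), for which the cross term becomes $-2\xi^T(\tilde C\otimes \Gamma)\xi$ exactly. In that case the $\dot c$ part must use $(c_i-\psi)^2$ with a weight of $1/2$, so that the linear-in-$c_i$ terms cancel exactly.

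\emph{Target inequality.} With either choice, the $c_i\sigma_i$-type positive terms are absorbed by Young's inequality, $2\sigma_i\le \epsilon c_i^2+\ldots$, or by cancellation, exactly as in the proofs of Theorems \ref{theorem-1} and \ref{the4-1-linear}. Taking $\psi$ large (e.g.\ $\psi\ge 1/\lambda_2(\mathcal{L})$ up to constants) then lets the SARE \eqref{2-2} yield
\begin{equation*}
\mathfrak{L}V\le-\xi^T(\tilde{C}\otimes I_n)\xi\le -\varepsilon V+\varphi(t),\qquad \varepsilon=1/\lambda_{\max}(P),
\end{equation*}
where $\varphi$ collects the $(c_i-\psi)^2$ terms.

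\emph{Existence and convergence.} Local Lipschitz continuity of the drift follows since $c_i(t)=c_i(0)+\int_0^t\xi_i^T\Gamma\xi_i\,ds$. Lemma \ref{lemma-1} then gives a unique global solution, once condition (i) is checked with $\beta$ absorbing $c_i(0)$ and $\psi$.

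For convergence I follow Step 2 of Theorem \ref{theorem-1}. Since $c_i$ is nondecreasing and $c_i\ge c_i(0)>0$, we get $\int_0^\infty\mathbb{E}\|\xi\|^2ds<\infty$. The nonnegative semimartingale convergence theorem gives $\limsup V<\infty$ a.s. Barbalat's lemma then yields $\mathbb{E}\|\xi\|^2\to0$, and the stochastic LaSalle lemma yields $\xi\to0$ a.s. Finally, connectivity makes $\mathcal{L}\otimes I$ injective on the consensus-orthogonal subspace, so $\vartheta\to0$ in both senses, which is consensus.
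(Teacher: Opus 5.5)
\paragraph{Overall.} Your fallback is exactly the paper's proof, and you should commit to it. The paper uses $V_2=\vartheta^T(\mathcal{L}\otimes P)\vartheta+\frac{1}{2}\sum_i(c_i-\psi_i)^2$ and rewrites $\sum_i(c_i-\psi_i)\xi_i^T\Gamma\xi_i=\vartheta^T[\mathcal{L}(\tilde{C}-\tilde{\psi})\mathcal{L}\otimes\Gamma]\vartheta$. It then discards the nonpositive remainder $-\xi^T(\tilde{C}\otimes\Gamma)\xi$. Next it uses $\mathcal{L}\tilde{\psi}\mathcal{L}\ge\lambda_2(\mathcal{L})\min_i\psi_i\,\mathcal{L}$ with $\psi_i>1/\lambda_2(\mathcal{L})$, together with the SARE, to get $\mathfrak{L}V_2\le-\vartheta^T(\mathcal{L}\otimes I_n)\vartheta\le-\varepsilon V_2+\frac{\varepsilon}{2}\sum_i(c_i-\psi_i)^2$. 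Existence and convergence are then delegated to Lemma \ref{lemma-1} and the steps of Theorem \ref{theorem-1}, as you propose.

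\paragraph{Two fixes to the fallback.} First, with weight $\frac{1}{2}$ the adaptive terms contribute $+\xi^T(\tilde{C}\otimes\Gamma)\xi$ against the cross term $-2\xi^T(\tilde{C}\otimes\Gamma)\xi$. So the cancellation is not exact (weight $1$ would make it exact); the leftover is merely nonpositive. Second, and more important, your target $\mathfrak{L}V\le-\xi^T(\tilde{C}\otimes I_n)\xi$ is not available from this functional. The only $c_i$-dependent negative term carries $\Gamma=PBB^TP$, which is singular whenever $m<n$. The correct bound is $-\vartheta^T(\mathcal{L}\otimes I_n)\vartheta$. Accordingly, $\int_0^\infty\mathbb{E}\|\xi(s)\|^2ds<\infty$ should be deduced from $\|\xi\|^2=\vartheta^T(\mathcal{L}^2\otimes I_n)\vartheta\le\lambda_N(\mathcal{L})\,\vartheta^T(\mathcal{L}\otimes I_n)\vartheta$, not from $c_i\ge c_i(0)$.

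\paragraph{The first route.} Your first-choice functional $2\sum_ic_i\sigma_i+\eta\sum_i(c_i-\psi)^2$ genuinely fails for $\Sigma_i\equiv1$, but not for the reason you give. The non-orthogonality of $(\tilde{C}\otimes I_n)\xi$ can be handled. Write $\Gamma=\sum_kg_kg_k^T$. Then $\sum_i\xi_i=0$ forces, for each $k$, the scalars $c_ig_k^T\xi_i$ to take both signs unless all vanish, which yields $\xi^T(\tilde{C}\mathcal{L}\tilde{C}\otimes\Gamma)\xi\ge\frac{\lambda_2(\mathcal{L})}{2N}\xi^T(\tilde{C}^2\otimes\Gamma)\xi$.

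The fatal term is $2\sum_i\dot{c}_i\sigma_i=2\xi^T(\tilde{\sigma}\otimes\Gamma)\xi$, which comes from differentiating $c_i\sigma_i$ in $c_i$. It is quartic in $\xi$, and nothing in the generator dominates it. Your Young step $2\sigma_i\le\epsilon c_i^2+\cdots$ presupposes that $\sigma_i$ is controlled by $c_i$, which is false: take a large initial disagreement, so that $\sigma_i\gg c_i\approx c_i(0)$, and the generator is positive. In Theorem \ref{theorem-1} exactly this control is supplied by the auxiliary gain, via $\sigma_i\le c_i\Sigma_i^{1/\mu}$ with $\mu>1$. Setting $\Sigma_i\equiv1$ removes it. That is why the undirected case needs $\vartheta^T(\mathcal{L}\otimes P)\vartheta$, where no $c_i\sigma_i$ coupling appears at all. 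Drop the first route rather than presenting it as the primary attempt.
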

	\begin{proof}
	Let $\Sigma_i(t)=1$. As mentioned in Theorem \ref{theorem-1}, the closed-loop MASs can be written as follows
	\begin{eqnarray}
		\begin{split}
			d\vartheta(t)=&(I_N \otimes A  +  \tilde{C}(t) \mathcal{L} \otimes B  \mathcal{K} )\vartheta(t)dt\\&+(I_N \otimes C)\vartheta(t)dw(t).
		\end{split}
	\end{eqnarray}
	Then, we choose Lyapunov function
	\begin{eqnarray}\label{V2}
		\begin{split}
			V_2(\vartheta(t),t)=\vartheta^T(\mathcal{L}\otimes P)\vartheta+\frac{1}{2}\sum_{i=1}^{N}(c_i(t)-\psi_i)^2,
		\end{split}
	\end{eqnarray}
	where $P>0$ and $\psi_i>0$. Applying the functional $\mathrm{It\hat{o}} $  formula to \eqref{V2}, we have
	\begin{eqnarray}
		\begin{split}
			dV_2&(\vartheta(t),t)\!=\!\mathfrak{L}V_2(\vartheta(t),t) dt+ 2\vartheta^T\!(t)(\mathcal{L}\otimes P C   )\vartheta(t)dw(t),\notag
		\end{split}
	\end{eqnarray}
	where $\mathfrak{L}V_2(\vartheta(t),t)$ is defined as
	\begin{eqnarray}
		\begin{split}
			\mathfrak{L}V_2&(\vartheta(t),t)=\vartheta^T(t)[\mathcal{L}\otimes (A ^TP+PA+C^TPC )] \vartheta(t)\\&+2\vartheta^T(t)[\mathcal{L}\tilde{C}(t)\mathcal{L}\otimes PB  \mathcal{K} ] \vartheta(t)+(c_i(t)-\psi_i)\dot{c}_i(t).\notag
		\end{split}
	\end{eqnarray}
	Let $\mathcal{K} =-B  ^TP$ and $\Gamma=PB  B  ^TP$. Then, we can obtain
	\begin{eqnarray}
		\begin{split}
			\mathfrak{L}V_2&(\vartheta(t),t)= \vartheta^T(t)[\mathcal{L}\otimes (A ^T\!P\!+\!PA \!+\!C^T\!PC )] \vartheta(t)\\&-2\vartheta^T(t)[\mathcal{L}\tilde{C}(t)\mathcal{L}\otimes PB  B  ^TP] \vartheta(t)\\&+\sum_{i=1}^{N}(c_i(t)-\psi_i)\vartheta_i^T(t)PB  B  ^TP \vartheta_i(t).\notag
		\end{split}
	\end{eqnarray}
	Note that $\vartheta_i(t)=\sum_{j\in N_i} (x_i(t)-x _j(t))=\sum_{j\in N_i} (\vartheta_i(t)-\vartheta_j(t))$. Then, we have
	\begin{eqnarray}\label{v1-2-linear}
		\begin{split}
			&\sum_{i=1}^{N}(c_i(t)-\psi_i)\vartheta_i^T(t)PB  B  ^TP \vartheta_i(t)\\&=\vartheta^T(t)[\mathcal{L}(\tilde{C}(t)-\tilde{\psi})\mathcal{L}\otimes PB  B  ^TP] \vartheta(t),
		\end{split}
	\end{eqnarray}
	where $\tilde{\psi}=\mathrm{diag}\{\psi_1,...,\psi_N\}$.
	Denote $\varepsilon=\frac{1}{\lambda_{\max}(P)}$. Let $\psi_i>\frac{1}{\lambda_2(\mathcal{L})}$. Note that $A ^TP+PA -PB  B  ^TP+C^TPC+I_n=0$. Thus, Substituting \eqref{v1-2-linear} into $\mathfrak{L}V_2(\vartheta(t),t)$ yields
	\begin{eqnarray}
		\begin{split}
			\mathfrak{L}&V_2(\vartheta(t),t)\\
			\le &\vartheta^T(t)[\mathcal{L}\otimes (A ^TP+PA -\tilde{\psi}\lambda_2(\mathcal{L}) PB  B  ^TP)] \vartheta(t)\\&+\vartheta^T(t)[\mathcal{L}\otimes (C^TPC  )] \vartheta(t)\\
			\le &-\vartheta^T(t)[\mathcal{L}\otimes I_n] \vartheta(t)\\
			\le &-\varepsilon V_2(\vartheta(t),t)+\bar{\varphi}(t),
		\end{split}
	\end{eqnarray}
	where $\bar{\varphi}(t)=\frac{\varepsilon}{2}\sum_{i=1}^{N}(c_i(t)-\psi_i)^2$.
	By Lemma \ref{lemma-1} and the similar steps of Theorem \ref{theorem-1}, the m.s. and a.s. consensus of \eqref{2-1} can be solved by the protocol \eqref{Section4-1} with $\Sigma_i(t)=1$ for the case of undirected topology.
\end{proof}
	Although the directed graph protocol discussed with $\gamma=0$ in the above theorem is theoretically applicable to undirected graphs, we exploit the advantageous structural properties of undirected graphs to develop an enhanced distributed control protocol.  In the following, we consider a fully distributed control protocol with tunable parameters $\gamma$ specifically tailored for undirected graphs, which demonstrates superior convergence performance compared to the previously established protocol for the case of directed graphs.
	
	The control protocol \eqref{Section4-1} can be written as
	\begin{eqnarray}\label{3-1}
		\begin{split}
			&u_i(t)=c_i(t)\mathcal{K} \xi_i(t),\\
			&\dot{c}_i(t)=e^{\gamma t}\xi_i^T(t)\Gamma \xi_i(t),
		\end{split}
	\end{eqnarray}
	where $\xi_i(t)=\sum_{j\in N_i} a_{ij}(x_i(t)-x _j(t))$, $c_i(t)$ is an adaptive gain functinon, $\mathcal{K} \in \mathbb{R}^{m\times n}$ and $\Gamma \in \mathbb{R}^{n\times n}$ are the feedback gain matrices, and  $\gamma$ is a non-negative constant to be determined. Under the assumption that the undirected graph $\bar{\mathcal{G}}$ is connected, we can show that  the m.s. and a.s. consensus can be solved by \eqref{3-1}.
	
	\begin{theorem}\label{the3-1-linear}
		Suppose the undirected graph $\bar{\mathcal{G}}$ is connected. The m.s. and a.s. consensus of \eqref{2-1} can be solved by the protocol \eqref{3-1} with $\mathcal{K} =-B  ^TP$ and $\Gamma=PB  B  ^TP$, where $P$ is the solution to the SARE \eqref{2-2} and $\frac{1}{\lambda_{\max}(P)}\le \gamma< \frac{3}{2\lambda_{\max}(P)}$.
	\end{theorem}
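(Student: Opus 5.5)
We would follow the template of Theorem \ref{the-3.1}, modifying the Lyapunov functional so that the weight $e^{\gamma t}$ in the adaptation law is compensated. With $\mathcal{K}=-B^TP$ and $\Gamma=PBB^TP$, the closed-loop error dynamics are
\[
d\vartheta=(I_N\otimes A-\tilde C(t)\mathcal{L}\otimes BB^TP)\vartheta\,dt+(I_N\otimes C)\vartheta\,dw .
\]
We take the time-weighted functional
\[
V_3(\vartheta_t,t)=e^{\gamma t}\vartheta^T(t)(\mathcal{L}\otimes P)\vartheta(t)+\frac12\sum_{i=1}^N(c_i(t)-\psi_i)^2 ,
\]
where each $c_i(t)=c_i(0)+\int_0^t e^{\gamma s}\xi_i^T\Gamma\xi_i\,ds$ is a path functional. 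The functional It\^o formula then gives
\begin{align*}
\mathfrak{L}V_3={}&\gamma e^{\gamma t}\vartheta^T(\mathcal{L}\otimes P)\vartheta
+e^{\gamma t}\vartheta^T[\mathcal{L}\otimes(A^TP+PA+C^TPC)]\vartheta\\
&-2e^{\gamma t}\vartheta^T[\mathcal{L}\tilde C\mathcal{L}\otimes\Gamma]\vartheta
+e^{\gamma t}\vartheta^T[\mathcal{L}(\tilde C-\tilde\psi)\mathcal{L}\otimes\Gamma]\vartheta .
\end{align*}
The last term comes from $\dot c_i=e^{\gamma t}\xi_i^T\Gamma\xi_i$ together with the identity \eqref{v1-2-linear}. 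The $\tilde C$ terms partially cancel and leave $-e^{\gamma t}\vartheta^T[\mathcal{L}(\tilde C+\tilde\psi)\mathcal{L}\otimes\Gamma]\vartheta$. Dropping the nonnegative $\tilde C$ part (since $c_i\ge c_i(0)>0$), and taking $\psi_i\ge 1/\lambda_2(\mathcal{L})$ so that $\mathcal{L}\tilde\psi\mathcal{L}\ge\mathcal{L}$ on the disagreement subspace, the SARE \eqref{2-2} gives
\[
\mathfrak{L}V_3\le e^{\gamma t}\vartheta^T[\mathcal{L}\otimes(\gamma P-I_n)]\vartheta .
\]
Here the upper restriction on $\gamma$ has to enter. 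With $\varepsilon=1/\lambda_{\max}(P)$ we have $I_n\ge\varepsilon P$, so
\[
\mathfrak{L}V_3\le-(\varepsilon-\gamma)e^{\gamma t}\vartheta^T(\mathcal{L}\otimes P)\vartheta .
\]
This is useful directly only when $\gamma<\varepsilon$. The range $\varepsilon\le\gamma<\tfrac32\varepsilon$ therefore means the crude bound must be sharpened. The retained $\tilde C$ part is the natural place to do this: $-e^{\gamma t}\vartheta^T[\mathcal{L}\tilde C\mathcal{L}\otimes\Gamma]\vartheta=-\sum_i c_i\dot c_i$. We would use it to absorb up to half of the $e^{\gamma t}$-weighted quadratic term, via the bound $c_i\ge c_i(0)$ and a choice of $\psi_i$ large enough that $\tilde\psi\lambda_2(\mathcal{L})\ge\tfrac32 I$, which gives an extra $\tfrac12 PBB^TP$. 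Combined with $PBB^TP$ already in the SARE, this should give
\[
\mathfrak{L}V_3\le-\big(\tfrac32\varepsilon-\gamma\big)e^{\gamma t}\vartheta^T(\mathcal{L}\otimes P)\vartheta+(\text{nonpositive terms}).
\]
Justifying exactly this $\tfrac32$ factor is the step we expect to be the main obstacle. We would first try splitting $A^TP+PA+C^TPC=PBB^TP-I_n$ and using a larger $\psi$ to dominate $PBB^TP$, then check that the coefficient of $P$ stays positive precisely when $\gamma<\tfrac32\varepsilon$.

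For well-posedness we would apply Lemma \ref{lemma-1} with $U$ a suitable multiple of $V_3$ and $\beta$ tied to $c_i(0)-\psi_i$. The integral term in condition (i) is dominated by $\frac12(c_i-\psi_i)^2$ (the factor $e^{-\gamma t}\le1$ helps), the $\lambda_1|y|^2$ bound follows from the $e^{\gamma t}$-weighted quadratic term, and condition (ii) follows from the estimate above with $\alpha$ a multiple of $\tfrac32\varepsilon-\gamma$ and a $\phi$-type remainder, with $k\le2$. Local Lipschitz continuity holds exactly as in Step 1 of Theorem \ref{theorem-1}.

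For convergence we would proceed as in Step 2 of Theorem \ref{theorem-1}. Integrating $\mathfrak{L}V_3\le0$ and taking expectations gives $\sup_t\mathbb{E}V_3<\infty$, and in particular
\[
\mathbb{E}\,\vartheta^T(\mathcal{L}\otimes P)\vartheta\le Ke^{-\gamma t}.
\]
This already yields m.s. consensus with explicit exponential rate $\gamma\ge\varepsilon$, which is the advertised improvement. For a.s. consensus, the nonnegative semimartingale convergence theorem gives $\sup_t V_3<\infty$ a.s., hence $\vartheta^T(\mathcal{L}\otimes P)\vartheta\le K(\omega)e^{-\gamma t}$ a.s. Connectedness of $\bar{\mathcal{G}}$ then gives $\|x_i-x_j\|\to0$ exponentially, both in mean square and almost surely.
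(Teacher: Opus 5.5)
\textbf{There is a genuine gap.} You correctly notice that $\mathfrak{L}V_3\le e^{\gamma t}\vartheta^T[\mathcal{L}\otimes(\gamma P-I_n)]\vartheta$ is nonpositive only for $\gamma\le\varepsilon$. Your functional therefore handles only the endpoint $\gamma=\varepsilon$ of the admissible range, and the proposed sharpening cannot cover $(\varepsilon,\tfrac{3}{2}\varepsilon)$.

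The retained term $-e^{\gamma t}\vartheta^T[\mathcal{L}\tilde C\mathcal{L}\otimes\Gamma]\vartheta$ and any enlargement of $\tilde\psi$ only add negative multiples of $\Gamma=PBB^TP$. These vanish whenever the $\mathbb{R}^n$-factor of $\vartheta$ lies in $\ker(B^TP)$. In particular, the extra $-\tfrac{1}{2}PBB^TP$ you hope to turn into $-\tfrac{1}{2}\varepsilon P$ is merely semidefinite. On $\ker(B^TP)$ you are left with $v^T(\gamma P-I_n)v$, which is positive as soon as that kernel contains a direction with $v^TPv>v^Tv/\gamma$, whatever the $\psi_i$ are.

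Worse, the target conclusion is false. Putting $e^{\gamma t}$ on the state part means a supermartingale $V_3$ would force $\vartheta^T(\mathcal{L}\otimes P)\vartheta=O(e^{-\gamma t})$. Take $A=\mathrm{diag}(-\tfrac{1}{4},0)$, $B=[0\ 1]^T$, $C=0$. Then $P=\mathrm{diag}(2,1)$ solves the SARE, $\varepsilon=\tfrac{1}{2}$, and $\mathcal{K}=[0\ \ {-1}]$. So the first components obey $\dot x_{i1}=-\tfrac{1}{4}x_{i1}$ regardless of the protocol. Writing $\vartheta_{\cdot 1}$ for the stacked first components, $e^{\gamma t}\vartheta^T(\mathcal{L}\otimes P)\vartheta\ge 2e^{(\gamma-\frac{1}{2})t}\vartheta_{\cdot1}^T(0)\mathcal{L}\vartheta_{\cdot1}(0)\to\infty$ for every $\gamma\in(\tfrac{1}{2},\tfrac{3}{4})$, unless these components start in consensus. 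Hence neither $\sup_t\mathbb{E}V_3<\infty$ nor a rate-$\gamma$ estimate can hold. The check of Lemma~\ref{lemma-1}(ii) you sketch rests on the same unavailable bound. The factor $\tfrac{3}{2}$ plays no role in the decay estimate.

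\textbf{The missing idea} is to put the exponential weight on the gain error rather than the state, as the paper does: $V_3=\vartheta^T(\mathcal{L}\otimes P)\vartheta+\sum_i e^{-\gamma t}(c_i-\psi_i)^2$. Since $e^{-\gamma t}\dot c_i=\xi_i^T\Gamma\xi_i$, the term $\sum_i 2e^{-\gamma t}(c_i-\psi_i)\dot c_i=2\vartheta^T[\mathcal{L}(\tilde C-\tilde\psi)\mathcal{L}\otimes\Gamma]\vartheta$ cancels the $\tilde C$-part of the drift exactly. With $\psi_i>1/(2\lambda_2(\mathcal{L}))$ and the SARE, this gives $\mathfrak{L}V_3\le-\vartheta^T(\mathcal{L}\otimes I_n)\vartheta-\gamma\sum_i e^{-\gamma t}(c_i-\psi_i)^2\le-\varepsilon V_3$. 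The last inequality is exactly where $\gamma\ge\varepsilon$ is used: differentiating $e^{-\gamma t}$ supplies the decay of the gain-error part.

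It\^o's formula for $e^{\varepsilon t}V_3$, with the stopping times $\tau_h$, Fatou's lemma, and the nonnegative semimartingale convergence theorem, then gives m.s.\ and a.s.\ exponential consensus at rate $\varepsilon$, not $\gamma$. The example above shows $\varepsilon$ cannot be improved in general. This functional also has exactly the form in Lemma~\ref{lemma-1}(i) with $\beta=\psi_i-c_i(0)$, which is how well-posedness is obtained.

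The upper bound $\gamma<\tfrac{3}{2}\varepsilon$ is used only afterwards, to show the input stays bounded. Combining $|u(t)|\le K\big(1+\int_0^t e^{\gamma s}|\vartheta(s)|^2ds\big)|\vartheta(t)|$ with $|\vartheta(s)|\le\varpi e^{-(\varepsilon/2-\kappa)s}$ a.s.\ yields the exponent $\gamma-\tfrac{3}{2}\varepsilon+3\kappa<0$.
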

	
	\begin{proof}
		Substituting the control protocol \eqref{3-1} into \eqref{2-1} yields the following closed-loop system
		\begin{eqnarray}\label{3-2-linear}
			\begin{split}
				dx  (t)=&(I_N \otimes A  + \tilde{C}(t) \mathcal{L} \otimes B  \mathcal{K} )x  (t)dt\\&+(I_N \otimes C)x(t)dw(t),
			\end{split}
		\end{eqnarray}
		where $\tilde{C}(t)=\mathrm{diag}\{c_1(t),...,c_N(t)\}$.
		Let $\vartheta_i(t)=x_i(t)-\frac{1}{N}\sum_{j=1}^{N} x _j(t)$.
		Denote $\vartheta(t)=[(I_N-\frac{1}{N}\mathbf{1}_N\mathbf{1}_N^T)\otimes I_n]x  (t)=[\vartheta_1^T(t),...,\vartheta_N^T(t)]^T$. Then, the closed-loop system \eqref{3-2-linear} can be written as follows
		\begin{eqnarray}
			\begin{split}\label{3-3-linear}
				d\vartheta(t)=&(I_N \otimes A  + \tilde{C}(t) \mathcal{L} \otimes B  \mathcal{K} )\vartheta(t)dt\\&+(I_N \otimes C)\vartheta(t)dw(t).
			\end{split}
		\end{eqnarray}
		Firstly, we choose the Lyapunov function
		\begin{eqnarray}\label{v1-1-linear}
			\begin{split}
				V_3(\vartheta(t),t)=\vartheta^T(\mathcal{L}\otimes P)\vartheta+\sum_{i=1}^{N}e^{-\gamma t}(c_i(t)-\psi_i)^2,
			\end{split}
		\end{eqnarray}
		where $P>0$ and $\psi_i>0$. Applying the functional $\mathrm{It\hat{o}} $  formula to \eqref{v1-1-linear}, we have
		\begin{eqnarray}
			\begin{split}
				dV_3&(\vartheta(t),t)=\mathfrak{L}V_3(\vartheta(t),t) dt+ 2\vartheta^T(\mathcal{L}\otimes P C   )\vartheta(t)dw(t),\notag
			\end{split}
		\end{eqnarray}
		where $\mathfrak{L}V_3(\vartheta(t),t)$ is defined as
		\begin{eqnarray}
			\begin{split}
				\mathfrak{L}V_3&(\vartheta(t),t)=\vartheta^T(t)[\mathcal{L}\otimes (A ^TP+PA+C^TPC)] \vartheta(t)\\&+2\vartheta^T(t)[\mathcal{L}\tilde{C}(t)\mathcal{L}\otimes PB  \mathcal{K} ] \vartheta(t)\\&+2\sum_{i=1}^{N}e^{-\gamma t}(c_i(t)-\psi_i)\dot{c}_i(t)-\gamma\sum_{i=1}^{N}e^{-\gamma t}(c_i(t)-\psi_i)^2.\notag
			\end{split}
		\end{eqnarray}
		Let $\mathcal{K} =-B  ^TP$ and $\Gamma=PB  B  ^TP$. Then, we can obtain
		\begin{eqnarray}
			\begin{split}
				\mathfrak{L}V_3&(\vartheta(t),t)=\vartheta^T(t)[\mathcal{L}\otimes (A ^T\!P\!+\!PA \!+\!C^T\!PC   )] \vartheta(t)\\&-2\vartheta^T(t)[\mathcal{L}\tilde{C}(t)\mathcal{L}\otimes PB  B  ^TP] \vartheta(t)\\&+2\sum_{i=1}^{N}(c_i(t)-\psi_i)\vartheta_i^T(t)PB  B  ^TP \vartheta_i(t)\\&-\gamma\sum_{i=1}^{N}e^{-\gamma t}(c_i(t)-\psi_i)^2.\notag
			\end{split}
		\end{eqnarray}
		Denote $\varepsilon=\frac{1}{\lambda_{\max}(P)}$. Let $\psi_i>\frac{1}{2\lambda_2(\mathcal{L})}$ and $\gamma\ge \varepsilon$. Note that $A ^TP+PA -PB  B  ^TP+C^TPC   +I_n=0$. Thus, Substituting \eqref{v1-2-linear} into $\mathfrak{L}V_3(\vartheta(t),t)$ yields
		\begin{eqnarray}\label{v1-5-linear}
			\begin{split}
				\mathfrak{L}&V_3(\vartheta(t),t)\\
				\le &\vartheta^T\!(t)[\mathcal{L}\otimes (A ^TP+PA -2\tilde{\psi}\lambda_2(\mathcal{L}) PB  B  ^TP)] \vartheta(t)\\&+\!\vartheta^T\!(t)[\mathcal{L}\!\otimes\! (C^T\!PC     )] \vartheta(t)\!-\!\gamma\sum_{i=1}^{N}\!e^{-\gamma t}(c_i(t)\!-\!\psi_i)^2\\
				\le &-\vartheta^T(t)[\mathcal{L}\otimes I_n] \vartheta(t)-\gamma\sum_{i=1}^{N}e^{-\gamma t}(c_i(t)-\psi_i)^2\\
				\le &-\varepsilon V_3(\vartheta(t),t).
			\end{split}
		\end{eqnarray}
		Next, we will prove the existence and uniqueness, m.s. stability, and a.s. stability  of the solution to \eqref{3-3-linear} in the following three steps.
		
		Step 1:
		Similar to Step 1 in the proof of Theorem \ref{theorem-1}, it can be seen that the drift and diffusion terms of closed-loop stochastic system \eqref{3-3-linear} satisfy the local Lipschitz condition.
		According to Lemma \ref{lemma-1}, the existence and uniqueness of the solution to \eqref{3-3-linear} can be proved.
		
		Step 2: By the functional $\mathrm{It\hat{o}} $ formula for $V_3(\vartheta(t),t)$, we have
		\begin{eqnarray}
			\begin{split}
				&\mathrm{E} (e^{\delta(\tau_h \wedge t)}V_3(\vartheta_{\tau_h \wedge t},\tau_h \wedge t)) - V_3(\vartheta_0,0) \\
				&= \mathrm{E}\int_0^{\tau_h \wedge t }e^{\delta s}[\delta V_3(\vartheta_s,s)+\mathfrak{L}V_3(\vartheta_s,s)]ds,\notag
			\end{split}
		\end{eqnarray}
		where $\delta>0$. Note that $\check{V} _3(\vartheta)\le V_3(\vartheta(t),t)$, where $\check{V} _3(\vartheta)=\vartheta^T(\mathcal{L}\otimes P)\vartheta$. Substituting the inequality $\check{V} _3(\vartheta)\le V_3(\vartheta(t),t)$ and \eqref{v1-5-linear} into the above equation yields
		\begin{eqnarray}
			\begin{split}
				&\mathrm{E}( e^{\delta(\tau_h \wedge t)}\check{V}_3(\vartheta_{\tau_h \wedge t} )-C_3  \\
				&\le (\delta-\varepsilon) \mathrm{E}\int_0^{\tau_h \wedge t }e^{\delta s} V_3(\vartheta_s,s)ds,\notag
			\end{split}
		\end{eqnarray}
		where $C_3=\check{V} _3(\vartheta_0)=\vartheta^T(0)(\mathcal{L}\otimes P)\vartheta(0)$. Let $\delta=\varepsilon$, then $\mathrm{E}( e^{\delta(\tau_h \wedge t)}\check{V}_3(\vartheta_{\tau_h \wedge t}) )\le C_3$. Thus, let $h\to \infty$, then we can obtain by using the Fatou lemma that
		\begin{eqnarray}\label{S1-3-linear}
			\begin{split}
				\mathrm{E}( e^{\delta t}\check{V}_3(\vartheta(t)) )\le C_3.
			\end{split}
		\end{eqnarray}
		That is, $\mathrm{E}( \check{V}_3(\vartheta(t)) )\le C_3e^{-\delta t}$, which implies
		\begin{eqnarray}\label{S1-4-linear}
			\begin{split}
				\lim_{t\to \infty}\sup \frac{1}{t}\log \mathrm{E}( \check{V}_3(\vartheta(t)) )\le -\delta.
			\end{split}
		\end{eqnarray}
		This together with the definition of $\check{V} _3$ also produces
		\begin{equation}
			\lim_{t\to \infty}\sup \frac{1}{t}\log ( \mathrm{E}|\vartheta(t)|^2)\le -\delta,
		\end{equation}
		That is, the m.s. exponential stability of the closed-loop system \eqref{3-3-linear} follows.
		According to the definition of $\vartheta(t)$, the m.s. consensus can be solved.
		
		Step 3: According to the functional $\mathrm{It\hat{o}} $ formula, it can be seen that
		\begin{eqnarray}
			\begin{split}
				&e^{\delta t} V_3(\vartheta(t),t)-V_3(\vartheta_0,0)\\
				&=\int_0^t e^{\delta s}[\delta V_3(\vartheta(s), s)+\mathfrak{L}V_3(\vartheta_s,s)]ds+M_{v3}(t),\notag
			\end{split}
		\end{eqnarray}
		where $M_{v3}(t)$ is a local martingale with $M_{v3}(0)=0$. Then, similar to \eqref{S1-3-linear}, it can be deduced that
		\begin{eqnarray}
			\begin{split}
				e^{\delta t} \check{V}_3(\vartheta(t))\le C_3+M_{v3}(t),\notag
			\end{split}
		\end{eqnarray}
		Using the non-negative semi-martingale convergence theorem \cite{liptser1989theory}, ones have $\lim_{t \to \infty}\sup [e^{\delta t} \check{V}_3(\vartheta(t))]<\infty$ a.s. Hence,
		\begin{eqnarray}\label{S1-5-linear}
			\begin{split}
				\lim_{t\to \infty}\sup \frac{1}{t}\log ( \check{V}_3(\vartheta(t)) )\le -\delta, a.s.
			\end{split}
		\end{eqnarray}
		Then, \eqref{S1-5-linear} together with the definition of $\check{V} _3$ yields
		\begin{eqnarray}
			\lim_{t\to \infty}\sup \frac{1}{t}\log (|\vartheta(t)| )\le -\frac{\delta}{2} \ a.s.\notag
		\end{eqnarray}
		That is, the a.s. exponential stability of the closed-loop system \eqref{3-3-linear} follows.
		According to the definition of $\vartheta(t)$, the a.s. consensus can be solved.
		
		Furthermore, the boundedness of input can be further elucidated below.
		\begin{eqnarray}
			\begin{split}
				|u(t)|=&|(\int_0^te^{\gamma s}\vartheta^T(s)R_1\vartheta(s)ds+C_1)R_2\vartheta(t)| \\
				\le &\lambda_{\max}(R_1R_2)\int_0^te^{\gamma s}|\vartheta(s)|^2ds|\vartheta(t)|+C_1R_2|\vartheta(t)|,\notag
			\end{split}
		\end{eqnarray}
		where $R_1=\mathcal{L}^2 \otimes PB  B  ^TP$, $C_1=\vartheta^T(0)R_1\vartheta(0)$ and $R_2=\mathcal{L} \otimes B  ^TP$.
		Note that $\lim_{t\to \infty}\sup \frac{1}{t}\log (|\vartheta(t)| )\le -\frac{\delta}{2} \ a.s.$ Hence, for any $0<\kappa<\frac{\delta}{2}$, one can find a positive random variable $\varpi=\varpi(\kappa)$ such that $|\vartheta(t)|\le \varpi e^{-(\frac{\delta}{2}-\kappa)t}$ for all $t\ge 0$.
		Thus, it can be obtained that
		\begin{eqnarray}
			\begin{split}
				|u(t)|\le &\lambda_{\max}(R_1R_2)\varpi^3\int_0^te^{(\gamma-\delta+2\kappa) s}ds\ e^{-(\frac{\delta}{2}-\kappa)t}\\
				&+C_1R_2\varpi e^{-(\frac{\delta}{2}-\kappa)t} \\
				\le &\frac{\lambda_{\max}(R_1R_2)\varpi^3}{\gamma-\delta+2\kappa}(e^{(\gamma-\frac{3\delta}{2}+3\kappa) t}-e^{-(\frac{\delta}{2}-\kappa)t})\\
				&+C_1R_2\varpi e^{-(\frac{\delta}{2}-\kappa)t}.\notag
			\end{split}
		\end{eqnarray}
		Note that $\delta=\varepsilon=\frac{1}{\lambda_{\max}(P)}$ and $ \gamma< \frac{3}{2\lambda_{\max}(P)}$. For sufficiently small $\kappa$, $\gamma-\frac{3\delta}{2}+3\kappa <0$. Therefore, it can be deduced that the input $u(t)$ is bounded.
	\end{proof}
	
	Moreover, according to Theorem \ref{the-3.1} in the preceding analysis, the adjustable parameter $\gamma$ is only required to satisfy $ 0<\gamma< \frac{3}{2\lambda_{\max}(P)}$ to guarantee the boundedness of input. The additional constraint $\gamma \ge \frac{1}{\lambda_{\max}(P)}$ in Theorem \ref{the3-1-linear} is introduced solely to derive the explicit exponential convergence rate of the stochastic MASs.
	
	\begin{remark}
		The inclusion of the exponential term $e^{\gamma t}$ introduces only a scalar operation per time step, whose computational cost is negligible relative to the distributed quadratic-form updates already inherent to the algorithm.
		Compared with the protocol presented in Theorem \ref{the-3.1}, the proposed scheme incorporates an adjustable parameter $\gamma$, thereby enhancing the flexibility of the control design. Moreover, it can be shown that an improved convergence rate is attainable through appropriate tuning of this parameter.
		
	\end{remark}
	
	
	\begin{remark}
		While the design of the fully distributed protocol for undirected graphs in Theorem \ref{the3-1-linear} is inspired by \cite{li2022fully}, our work differs from the results of \cite{li2022fully} in the following three aspects. First, we provide a rigorous proof of the existence and uniqueness of solutions for the path-dependent stochastic system by establishing verifiable sufficient conditions (Lemma~\ref{lemma-1}) within a stochastic Lyapunov framework, which serves as the foundation for the fully distributed protocol design. Second, we establish the uniform boundedness of the proposed protocol under the parameter constraint $\gamma < \frac{3}{2\lambda_{\max}(P)}$, ensuring that the control inputs remain bounded throughout system evolution, a critical practical property not explicitly analyzed in \cite{li2022fully}. Third, our core contribution lies in developing a fully distributed consensus framework for directed graphs (Theorems~\ref{theorem-1} and \ref{the4-1-linear}). By incorporating the auxiliary time-varying gain $\Sigma_i(t)$, the proposed protocol eliminates reliance on global network information and topological symmetry, thereby extending applicability to realistic networked systems with asymmetric information exchange.

	\end{remark}

	
%

	\section{Simulations}\label{sec4}
	In this section, two simulation examples are given to verify the effectiveness of the proposed theoretical results.
	
	Consider the stochastic MASs \eqref{2-1} composed of six agents, where $A =\begin{bmatrix}-0.5& 0.1 \\ 0& -20\end{bmatrix}$, $B  =\begin{bmatrix}0 \\1 \end{bmatrix}$, and $C=\begin{bmatrix}0&0 \\0 &6.5\end{bmatrix}$.
	According to the SARE \eqref{2-2}, we can obtain $P=\begin{bmatrix}1&0.0047  \\0.0047 &0.9046\end{bmatrix}$. Then, it can be deduced that $\mathcal{K} =\begin{bmatrix}-0.0047&-0.9046 \end{bmatrix}$, and $\Gamma=\begin{bmatrix}0&0.0042 \\0.0042 &0.8182\end{bmatrix}$.
	
	\subsection{The case of directed topology}
	The interaction topology is modeled as a directed graph $\tilde{\mathcal{G}}$, as shown in Figure \ref{directed.png}.
	\begin{figure}[htbp]
		\centering
		\hspace{2mm}
		\begin{minipage}[b]{4cm}
			\includegraphics[width=4cm]{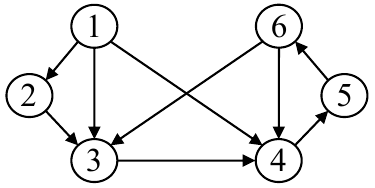}
			\vspace{-5.5mm}
			\caption{The directed graph\label{directed.png}}
		\end{minipage}
		\hspace{4mm}
		\begin{minipage}[b]{3.5cm}
			\includegraphics[width=3.5cm]{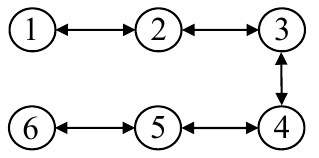}
			\caption{The undirected graph\label{undirected.png}}
		\end{minipage}
	\end{figure}
	The control input $u_i(t)$ adopts the adaptive control protocol \eqref{4-1}. The initial states for each agent are randomly chosen from $[-2,2]$. We choose $k_1=k_2=1$ and $c_i(0)=1$, $i=1, ..., 6$.
	The revolutions of adaptive gains $c_i(t)$, $i=1, ..., 6$ are shown in Figure \ref{Fig7.png}, implying that $c_i(t)$ will converge to a finite positive constant.
	The relative states $x_{i1}(t)-x_{11}(t)$, $x_{i2}(t)-x_{12}(t)$, $i=1, ..., 6$ of one sample path and the behaviors of the m.s. relative states ${E\|x_{i1}(t)-x_{11}(t)\|^2}$, ${E\|x_{i2}(t)-x_{12}(t)\|^2}$, $i=1, ..., 6$ of $10^2$ sample paths are shown in Figure \ref{Fig6.png}, indicating that a.s. and m.s. consensus can be achieved.
\begin{figure}[htbp]
	\centering
	\hspace{-1mm}
	\begin{minipage}[b]{4cm}
		\includegraphics[width=4.5cm]{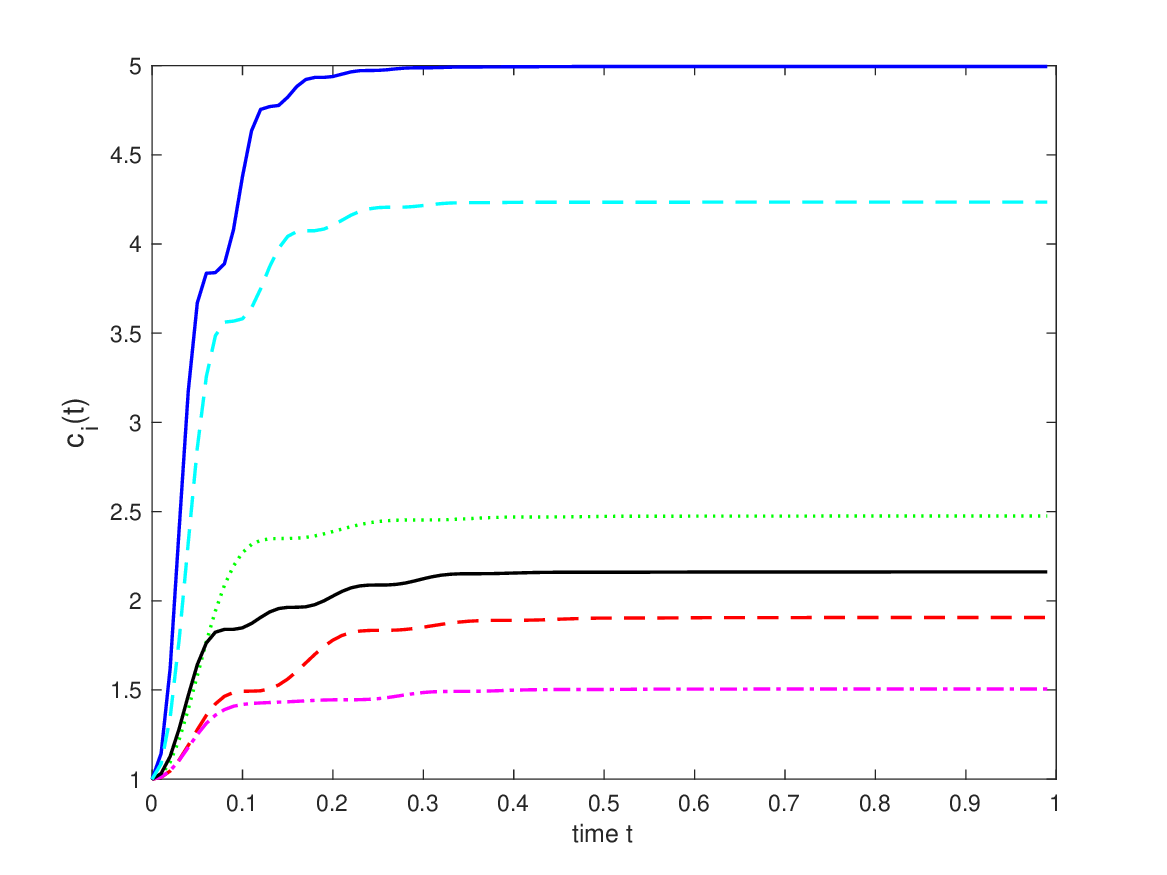}
		\vspace{-6.5mm}
		\caption{The adaptive gains for directed topology\label{Fig7.png}}
	\end{minipage}
	\hspace{1.5mm}
	\begin{minipage}[b]{4cm}
		\includegraphics[width=4.5cm]{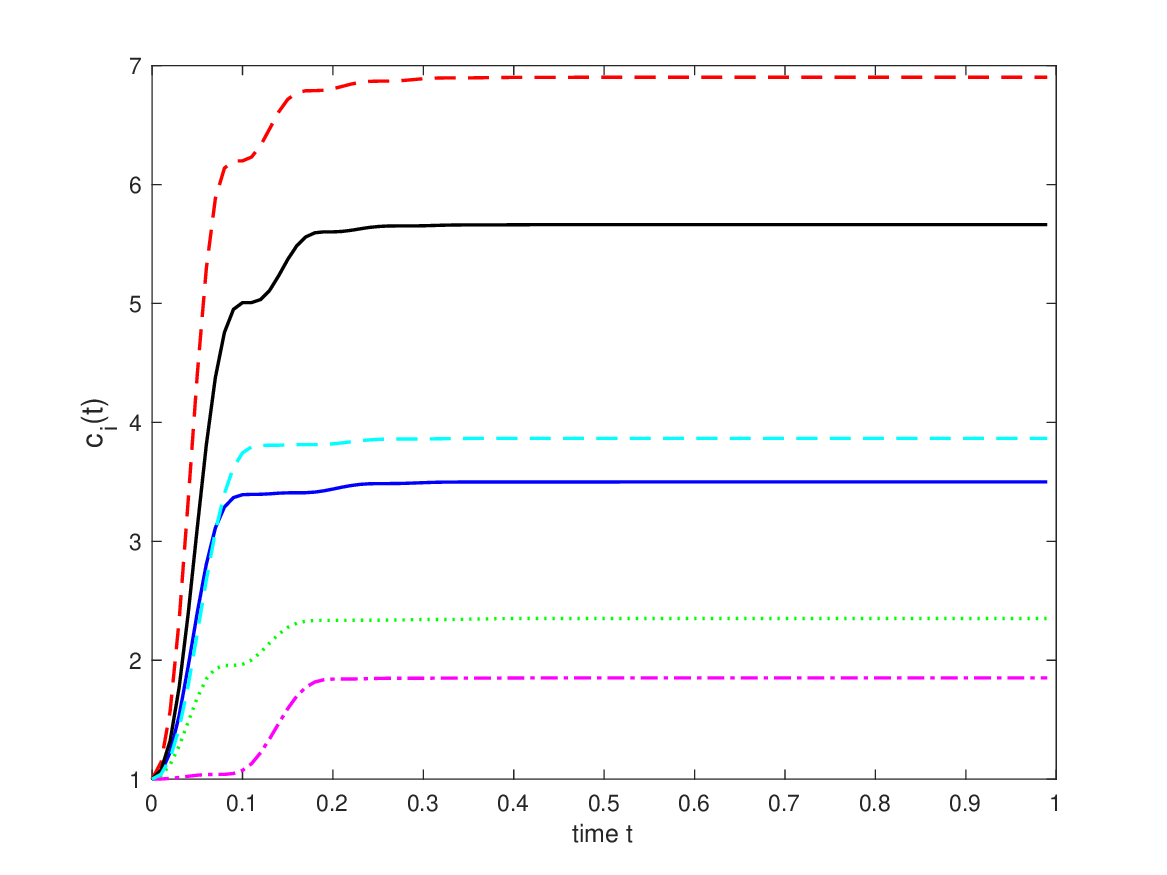}
		\vspace{-6.5mm}
		\caption{The adaptive gains for undirected topology\label{Fig3.png}}
	\end{minipage}
	\vspace{-5mm}
\end{figure}
	
	\subsection{The case of undirected topology}
	The interaction topology is modeled as an connected undirected graph $\bar{\mathcal{G}}$, as shown in Figure \ref{undirected.png}.
	The control input $u_i(t)$ adopts the adaptive control protocol \eqref{3-1}. The initial states for each agent are randomly chosen from $[-2,2]$.
	By Theorem \ref{the3-1-linear}, we can choose $\gamma=1$. Let $c_i(0)=1$, $i=1, ..., 6$.
	The revolutions of adaptive gains $c_i(t)$ are shown in Figure \ref{Fig3.png}.
	Considering the relative state of one sample path for each agent $x_{i1}(t)-x_{11}(t)$ and $x_{i2}(t)-x_{12}(t)$, we have Figure \ref{Fig2.png}, which indicates that states of the six agents tend to be a.s. consensus over time.
	For m.s. consensus analysis, we generate $10^2$ sample paths. Then, considering the behaviors of the m.s. relative states ${E\|x_{i1}(t)-x_{11}(t)\|^2}$ and ${E\|x_{i2}(t)-x_{12}(t)\|^2}$, we obtain Figure \ref{Fig2.png}, which demonstrates that the six agents reach m.s. consensus.
	\begin{figure}[htbp]
		\vspace{-4mm}
		\centering
		\begin{minipage}[b]{8cm}
				\hspace{2mm}\includegraphics[width=8cm]{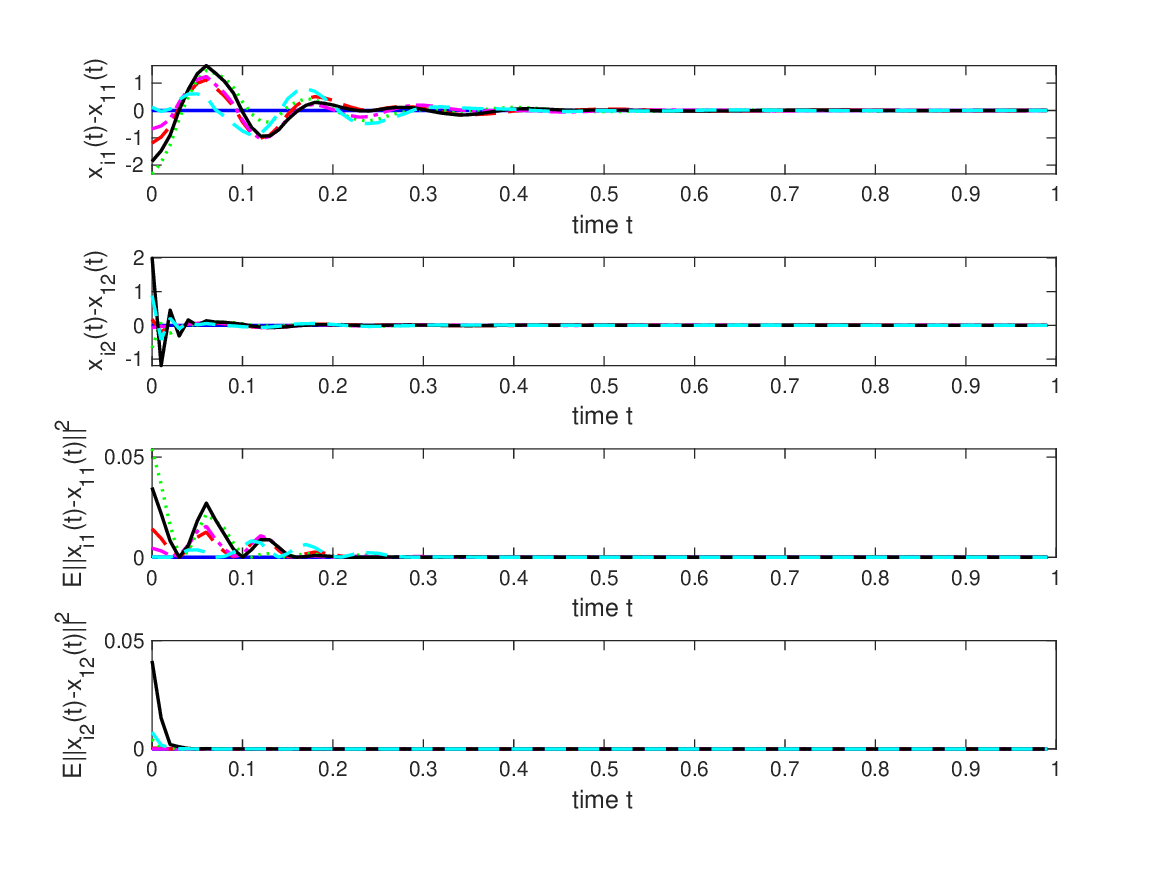}
			\vspace{-9mm}
			\caption{The relative states errors and m.s. relative state errors for directed topology\label{Fig6.png}}
		\end{minipage}
		\begin{minipage}[b]{8cm}
				\hspace{2mm}\includegraphics[width=8cm]{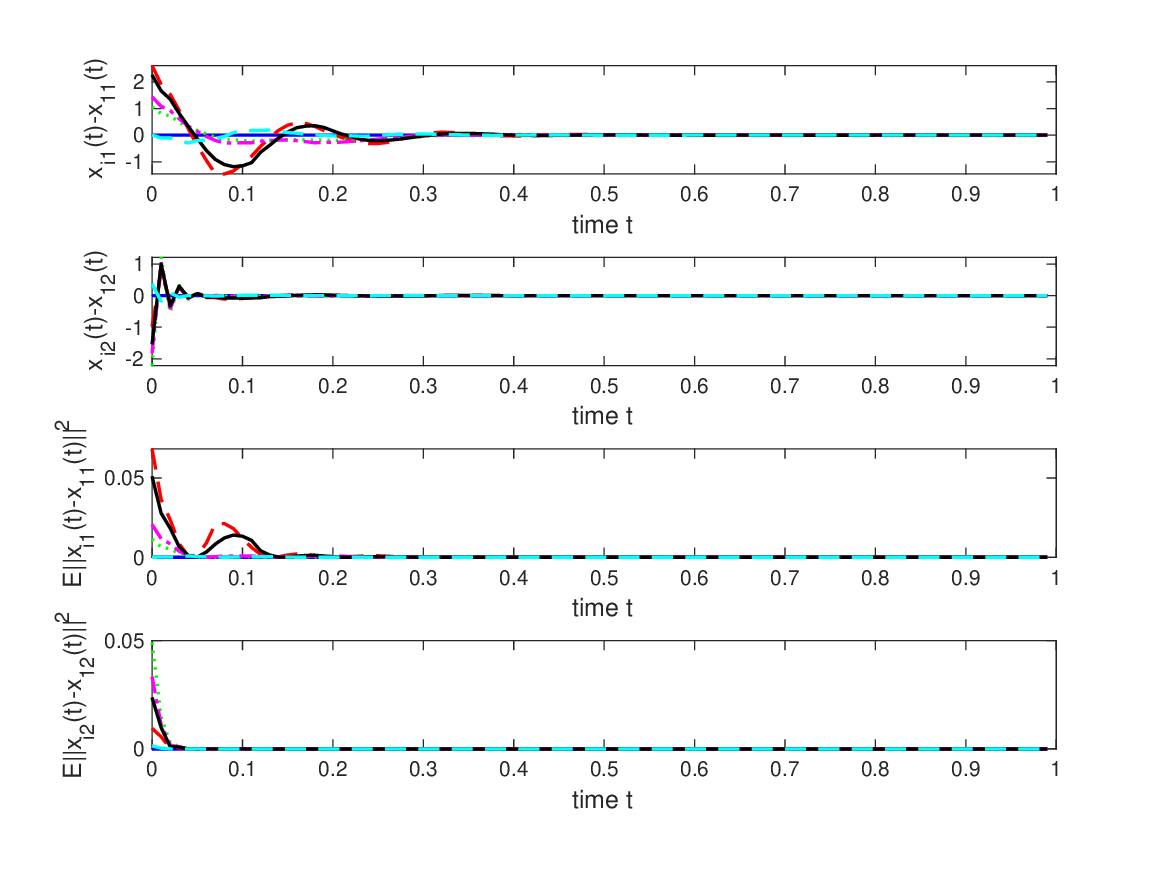}
			\vspace{-9mm}
			\caption{The relative states errors and m.s. relative state errors for undirected topology\label{Fig2.png}}
		\end{minipage}
		\vspace{-4mm}
	\end{figure}
	
	Additionally, for the cases of $\gamma = 0, 0.5,1$, the same initial value $x(0)=[-10\ -20\ -1.5\  -15\ -0.5\ -5\ 0.1\ 1\ 10\ 10\ 2\ 2]^T$ is used. Considering the behaviors of the m.s. relative states ${E\|x_{i1}(t)-x_{11}(t)\|^2}$ and ${E\|x_{i2}(t)-x_{12}(t)\|^2}$, we obtain Figures \ref{Fig-gamma-x1.png}-\ref{Fig-gamma-x2.png}, which demonstrates the effectiveness of the exponential gain term $e^{\gamma t}$ in accelerating convergence.
	\begin{figure}[htbp]
		\vspace{-2mm}
		\centering
		\begin{minipage}[b]{8cm}
				\hspace{2mm}\includegraphics[width=8cm]{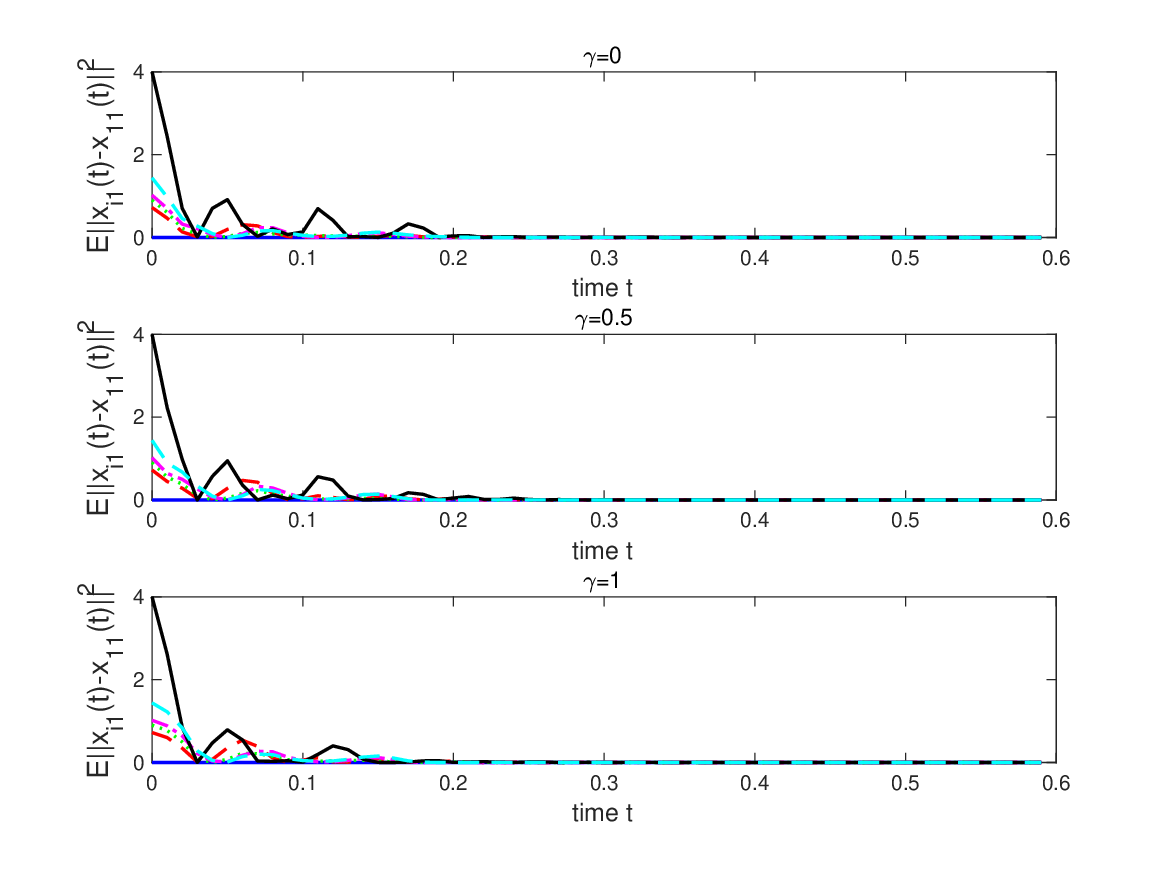}
			\vspace{-9mm}
			\caption{The m.s. relative state errors ${E\|x_{i1}(t)-x_{11}(t)\|^2}$ for the case of $\gamma = 0, 0.5,1$\label{Fig-gamma-x1.png}}
		\end{minipage}
		\begin{minipage}[b]{8cm}
				\hspace{2mm}\includegraphics[width=8cm]{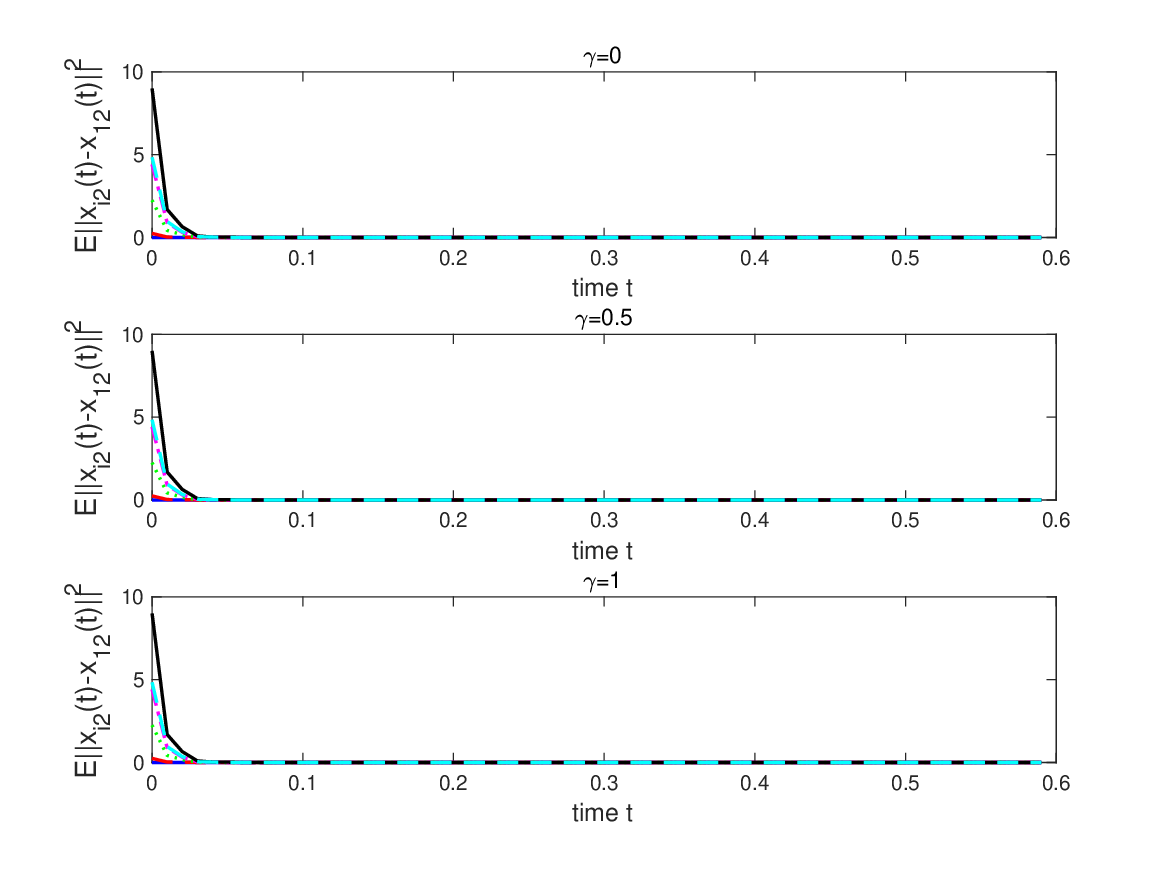}
			\vspace{-9mm}
			\caption{The m.s. relative state errors ${E\|x_{i2}(t)-x_{12}(t)\|^2}$ for the case of $\gamma = 0, 0.5,1$\label{Fig-gamma-x2.png}}
		\end{minipage}
	\end{figure}
	
	\section{Conclusion}\label{sec5}
	The design of a unified fully distributed consensus protocol for a class of stochastic MASs under directed and undirected graphs is investigated. The existence and uniqueness of solutions to path-dependent and highly nonlinear stochastic systems are first explored.
	For the case of directed graphs, a unified fully distributed control protocol is designed for the first time to solve m.s. and a.s. consensus for stochastic MASs.
	Then, for the case of undirected graphs, we develop an enhanced fully distributed protocol and derive explicit exponential convergence rates for both m.s. and a.s. consensus. 
	
	Future research directions include extending these results to MASs with compound disturbances and investigating more challenging scenarios involving heterogeneous nonlinear dynamics and time-varying communication topologies.
	Notably, the framework established in this paper can be naturally extended to nonlinear systems. While the current analysis focuses on linear dynamics with path-dependent adaptive feedback, the core analytical approach does not rely on the linear dynamics. For nonlinear dynamics satisfying appropriate growth conditions (such as, local Lipschitz condition and polynomial growth), a similar fully distributed protocol design and consensus analysis can be carried out for this class of stochastic nonlinear MASs. These extensions will be explored in future work.
	
	\bibliographystyle{IEEEtran}
	\bibliography{ref}\ 

	\vfill
	
\end{document}